\newcommand\refle{\mathrm{refl}}
\newcommand\peel{\mathrm{peel}}
\newcommand\letc{\mathrm{let}}
\newcommand\inc{\mathrm{in}}
\newcommand\nullt{\mathrm{null}}
\newcommand\efq{\mathrm{efq}}
\newcommand\HRO{\ensuremath{\mathbf{HRO}}}
\newcommand\HROo{\ensuremath{\mathbf{HRO^{o}}}}
\newcommand\M{\ensuremath{\mathbf{M}}}
\newcommand\Pm{\ensuremath{\mathrm{pm}}}
\newcommand\HAw{\ensuremath{\mathbf{HA^{\omega}}}}
\newcommand\NHAw{\ensuremath{\mathbf{N\mbox{-}HA^{\omega}}}}
\newcommand\EHAw{\ensuremath{\mathbf{E\mbox{-}HA^{\omega}}}}
\newcommand\IHAw{\ensuremath{\mathbf{I\mbox{-}HA^{\omega}}}}
\newcommand\LEHAw{\ensuremath{\mathbf{\lambda E\mbox{-}HA^{\omega}}}}
\newcommand\LHAw{\ensuremath{\mathbf{\lambda HA^{\omega}}}}
\newcommand\Ext{\ensuremath{=^{\mathrm{ext}}}}
\newcommand\extpm{\ensuremath{\mathrm{ext}}}
\newcommand\extp{\mathrm{e}}
\newcommand\wfp{\mathbf{wf}}
\newcommand\Eqpm{\ensuremath{=^{\mathrm{pm}}}}
\newcommand\Equiv{\ensuremath{\mathrm{Equiv}}}
\newcommand\T{{\textrm{T}}}
\newcommand\Nat{\mathbf{N}}
\newcommand\succe{\mathrm{S}}
\newcommand\Sym{\mathbf{Sym}}
\newcommand\sympm{\mathrm{sym}^{\mathrm{pm}}}
\newcommand\reflpm{\mathrm{refl}^{\mathrm{pm}}}
\newcommand\Trans{\mathbf{Trans}}
\newcommand\transpm{\mathrm{trans}^{\mathrm{pm}}}
\newcommand\Elim{\mathrm{Elim}}
\newcommand\apppm{\mathrm{app}}
\newcommand\Rec{\mathrm{Rec}}
\newcommand\Hquote{\mathrm{quote}}
\newcommand\trans{\mathrm{trans}}
\newcommand\FV{\mathbf{FV}}
\newcommand\Dom{\mathbf{Dom}}
\newcommand\Ind{\mathrm{Ind}}
\newcommand\Collaps{\mathrm{Collaps}}
\newtheorem{thm}{Theorem}
\newtheorem{lemma}{Lemma}
\newtheorem{conj}{Conjecture}
\newtheorem{fact}{Fact}
\newtheorem{cor}{Corollary}
\newtheorem{prop}{Proposition}
\title{An Interpretation of \EHAw{} inside \HAw{}}
\author{F\'elix Castro
\institute{IRIF\\
    Universit\'e Paris Cit\'e\\
    Paris, France}
\institute{IMERL\\
Facultad de Ingenier\'ia, Universidad de la
Rep\'ublica\\
Montevideo, Uruguay}
\email{castro@irif.fr}
}
\begin{document}
\maketitle

\begin{abstract}
Higher Type Arithmetic (\HAw{}) is a first-order 
many-sorted theory.
It is a conservative extension of Heyting Arithmetic obtained by extending the syntax of terms to all of System T: the objects of interest here are the functionals of ``higher types''.
While equality between natural numbers is specified by the axioms of 
Peano, how can equality between functionals be defined? From this 
question, different versions of \HAw{} arise, such as an extensional 
version (\EHAw{}) and an intentional version (\IHAw{}).
In this work, we will see how the study of partial equivalence relations 
leads us to design a translation by parametricity from \EHAw{} to \HAw{}. 
\end{abstract}

\section{Introduction}
\label{sec:Intro}
%\addcontentsline{toc}{section}{\nameref{sec:Intro}}

In second-order logic, it can be shown as a meta-theorem that
two extensionally equal predicates satisfy the same properties.
It is not the case in higher-order logic: this is due
to the potential existence of non-extensional (higher-order) predicates.
However, Gandy showed that axioms of extensionality could be consistently
added by restraining the range of quantification to \emph{extensional}
elements~\cite{gan56}.
A similar phenomenon occurs in Higher Type Arithmetic (\HAw{}): 
one cannot prove
in \HAw{} that two extensionally equal functions satisfy the same 
formulas. It can be seen for instance by working in the model
of Hereditary Recursive Operations~\HRO{}~\cite{tro73} where a
functional can inspect the source code of its argument. But, again,
axioms of extensionality can be added without loss of consistency: Zucker
showed that every model of \NHAw{} (Higher Type Arithmetic with
equality at all levels of sort)
can be turned into a model
of \EHAw{} (Higher Type Arithmetic with
extensional equality at all levels of sort)~\cite{zuc71}.

In this work we tackle a similar problem.
Starting from \HAw{}, we show
that an extensional equality can be consistently 
added at all levels of sorts. 
Taking inspiration from syntactical
models of type theory~\cite{bpt17,abkt19}, we choose to do it in a syntactical
fashion: we design an interpretation of \EHAw{} in \HAw{} that
we express as a translation between two proof systems (without reduction
rules). Concretely, we will compile a language with extensional
equality at all levels of sorts to a language that merely has equality
in the sort $\Nat{}$. It will be done using techniques
of parametricity, as one goal of this paper is to
emphasize that parametricity can be used to
extend equality.

After exposing a proof system \LHAw{} that captures Higher Type 
Arithmetic (Section \ref{sec:TypeSystem}),
we will study families (indexed by
the sorts of System~\T{}) of (internal) partial
equivalence relations that could be used to extend equality
(Section \ref{sec:PEE}).
In particular, we will compare two potential candidates: 
\begin{enumerate}
    \item a family $\Ext_{\sigma}$ generated from equality
    (over $\Nat$) in an extensional fashion;
    \item a family $\Eqpm_{\sigma}$ generated from equality 
    (over $\Nat$) in a
    way reminiscent of binary parametricity~\cite{rey83}.
\end{enumerate}
While the former is reflexive, the latter is not. But being reflexive
is not desirable in this context. 
Indeed, as explained above, one 
needs to restrict the range of quantifications before extending
equality: specifically we will restrict quantifications on a sort
$\sigma$ to the domain of $\Eqpm_{\sigma}$. 
Our first translation will be used to show that each 
closed term of System~\T{} 
is indeed in the domain of $\Eqpm$: we translate
judgments of System~\T{} into judgments of \LHAw{}
and we follow
the typical translation by parametrecity, as it will allow us to show
that typed terms satisfy the relation linked to their type
\cite{rey83,wad89, BJP10}.
Finally, by keeping the idea of a translation by
parametricity,
we will translate a proof system
\LEHAw{} (capturing \EHAw{}) to \LHAw{} 
(Section~\ref{sec:ST}). Before concluding,
we will compare our result and our methodology
to related work~(Section~\ref{sec:RW}).

\section{A proof system for Higher Type Arithmetic}
\label{sec:TypeSystem}
%\addcontentsline{toc}{section}{\nameref{sec:TypeSystem}}

\subsection{System T}
\label{subsec:SystemT}
%\addcontentsline{toc}{subsection}{\nameref{subsec:SystemT}}
We use a version of
G\"odel's System~\T{}
obtained by extending the simply typed 
$\lambda$-calculus (\`a la Church) with a type constant $\Nat$ and native
constructors
to use it.
Terms, sorts and signatures of System~\T{} are described as follows:
$$\hspace*{0.5cm} 
\begin{array}{l@{\hspace*{1cm}}rcl}
\textrm{Sorts} & \sigma,\tau &::= & 
    \Nat \ \vert \ \sigma \rightarrow \tau \\[0.2cm]
\textrm{Terms} & t,u &::= & 
    x^{\sigma} \ \vert \ \lambda x^{\sigma}. t \ \vert \ t u \\
	&       &     & \vert \ 0 \ \vert \ \succe \, t \ \vert \
	                    \Rec^{\sigma} \, t \, u \, v \\[0.2cm]
\textrm{Signatures} & \Delta &::= & 
    \emptyset \ \vert \ \Delta , x^{\sigma} \\[0.2cm]
\end{array}$$
System~\T{} is presented in Church's style so terms come 
associated with a unique sort.	Nevertheless, we use
a type system (see Figure \ref{fig1} 
page \pageref{fig1}) to take into account in which
signature (or environment) a term is considered. We may omit
sort annotations on variables.

\begin{figure}[t]
		\centering
	\begin{displaymath}
	\begin{array}{cc}
    
    \multicolumn{2}{c}{

		\prfbyaxiom{}{
\Delta_1 ,x^{\sigma} , \Delta_2 \vdash_{\T} x^{\sigma}: \sigma} 
}
     \\[0.2cm]
 
	\prftree
    	{\prfassumption{ \Delta, x^{\sigma} \vdash_{\T} t: \tau}}
    	{\Delta \vdash_{\T}  \lambda x^{\sigma}. t: \sigma \rightarrow \tau
    	} & 
 
	\prftree
    	{\prfassumption
    	    {\Delta \vdash_{\T} t: \sigma \rightarrow \tau}}
    	{\prfassumption{ \Delta \vdash_{\T} u: \sigma}}
    	{\Delta \vdash_{\T} t u: \tau } \\[0.2cm] 
   
	\prfbyaxiom{}
    	{\Delta \vdash_{\T} 0: \mathbf{N} } &   

	\prftree
	    {\prfassumption{ \Delta \vdash_{\T} t: \Nat}}
		{\Delta \vdash_{\T} \succe \, t: \Nat }  \\[0.2cm]

	\multicolumn{2}{c}{
		\prftree
		{\prfassumption{ \Delta \vdash_{\T} t: \sigma}}
		{\prfassumption{ \Delta \vdash_{\T} u: 
		    \sigma \rightarrow \Nat \rightarrow \sigma}}
		{\prfassumption{ \Delta \vdash_{\T} v: \Nat}}
	   {\Delta \vdash_{\T} \Rec^{\sigma} \, t \, u \, v: \sigma}   
   }\\[0.2cm]
   
\end{array}
\end{displaymath}
\caption{\label{fig1} Derivation in System~\T{}}
\end{figure}

We consider the following rules on terms
    $$\begin{array}{rcl}
         (\lambda x . t) \, u & \succ & t [x:: = u]  \\
         \Rec \, t \, u \, 0 & \succ & t \\
         \Rec \, t \, u \, (\succe \, v) & \succ & 
             u \, (\Rec \, t \, u \, v) \, v
    \end{array}$$
from which we generate reduction and congruence
         $t  \leadsto  u$  and 
         $t  \cong  u$
as respectively
the least reflexive, transitive and closed by congruence relation 
containing
$\succ$ and 
the least closed by congruence equivalence relation containing 
$\succ$. We define a substitution $\theta$ to be a finite
function from variables to terms. The action of a substitution
$\theta$ on terms, denoted $t[\theta]$, corresponds to the 
simultaneous substitutions of free variables $x$ in the
domain of $\theta$ by~$\theta(x)$.

Metatheoretical results about System~\T{} can be found 
in the book of Girard, Lafont and Taylor~\cite{gir89}, for instance: 

\begin{enumerate}
    \item terms of System~\T{}
    are strongly normalizable;
    \item closed normal terms of type 
    $\Nat$ are of the form $\succe^n \, 0$, closed normal terms
    of type $\sigma \rightarrow \tau$ are of the form 
    $\lambda x^{\sigma}. t$.
    \end{enumerate}

Finally, we will use the two following facts.

\begin{fact}
A generalized version of the weakening rule is admissible for
this system: 
    \begin{center}
        \begin{tabular}{crcl}
             if & $\Delta \subseteq \Delta'$ and 
             $\Delta \vdash_{\T} t: \sigma$ & then &
             $\Delta' \vdash_{\T} t: \sigma$
        \end{tabular}
    \end{center}    
    where $\Delta \subseteq \Delta'$ is interpreted as the set-theoretic
    inclusion (while seeing signatures as sets).
\end{fact}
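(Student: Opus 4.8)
The plan is to argue by structural induction on the derivation of $\Delta \vdash_{\T} t: \sigma$, checking each of the six rules of Figure~\ref{fig1}. Reading $\Delta \subseteq \Delta'$ set-theoretically, the single fact that drives the whole argument is that extending both signatures by the same declaration preserves the inclusion: from $\Delta \subseteq \Delta'$ one obtains $\Delta, x^{\sigma} \subseteq \Delta', x^{\sigma}$, since as sets $\Delta \cup \{x^{\sigma}\} \subseteq \Delta' \cup \{x^{\sigma}\}$. I would record this remark first, as it is exactly what lets the induction pass through a binder.

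The base cases are immediate. For the variable axiom, deriving $\Delta \vdash_{\T} x^{\sigma}: \sigma$ forces $x^{\sigma} \in \Delta$, and since $\Delta \subseteq \Delta'$ we have $x^{\sigma} \in \Delta'$, so the same axiom yields $\Delta' \vdash_{\T} x^{\sigma}: \sigma$; the axiom for $0$ holds in any signature, so $\Delta' \vdash_{\T} 0: \Nat$ needs no hypothesis. For the remaining cases I would simply re-apply the corresponding rule after invoking the induction hypothesis on each premise. For instance, in the application case $\Delta \vdash_{\T} t: \sigma \rightarrow \tau$ and $\Delta \vdash_{\T} u: \sigma$ give, by induction, $\Delta' \vdash_{\T} t: \sigma \rightarrow \tau$ and $\Delta' \vdash_{\T} u: \sigma$, hence $\Delta' \vdash_{\T} t u: \tau$; the successor and recursor cases are handled identically.

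The one case worth a second look is abstraction, which is where the main (admittedly mild) obstacle lies. From the premise $\Delta, x^{\sigma} \vdash_{\T} t: \tau$ I want $\Delta', x^{\sigma} \vdash_{\T} t: \tau$, and this is an instance of the induction hypothesis precisely because $\Delta, x^{\sigma} \subseteq \Delta', x^{\sigma}$ by the opening remark; re-applying the abstraction rule then gives $\Delta' \vdash_{\T} \lambda x^{\sigma}. t: \sigma \rightarrow \tau$. The point that could a priori cause trouble is a clash between the bound variable and a declaration already present in $\Delta'$, which in other presentations would force an $\alpha$-renaming. Here it does not: since the calculus is \`a la Church each variable carries its sort, and the union $\Delta' \cup \{x^{\sigma}\}$ simply collapses when $x^{\sigma} \in \Delta'$, so the inclusion $\Delta, x^{\sigma} \subseteq \Delta', x^{\sigma}$ holds unconditionally and no renaming is needed. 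This is the only spot where the set-theoretic reading of signatures really earns its keep.
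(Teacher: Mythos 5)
Your proof is correct: the paper states this weakening property as a Fact without proof, and the routine structural induction you give---with the variable and zero axioms as base cases, the inclusion $\Delta, x^{\sigma} \subseteq \Delta', x^{\sigma}$ pushing the induction through the abstraction rule, and the observation that the set-theoretic reading of signatures (in a Church-style calculus) makes $\alpha$-renaming unnecessary---is exactly the standard argument the authors leave implicit. Nothing is missing.
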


\begin{fact}
If $\theta$ is a substitution then
              $t \cong u $  implies 
             $t[\theta] \cong u[\theta]$.
\end{fact}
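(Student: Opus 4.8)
The plan is to proceed by induction on the derivation witnessing $t \cong u$, that is, on the inductive structure through which the relation $\cong$ is generated from $\succ$ as the least congruent equivalence relation containing it.

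First I would dispatch the closure cases. If $t \cong u$ was obtained by reflexivity, symmetry or transitivity, the result is immediate from the fact that $\cong$ is itself reflexive, symmetric and transitive, applying the induction hypothesis to the premises. If it was obtained by one of the congruence rules---say $t u \cong t' u'$ from $t \cong t'$ and $u \cong u'$---then since substitution commutes with the term constructors, for instance $(t u)[\theta] = (t[\theta])\,(u[\theta])$, the claim follows by applying the induction hypotheses to each premise and then closing under the same congruence rule for $\cong$. The cases for $\lambda$, $\succe$ and $\Rec$ are entirely analogous.

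The remaining, and genuinely delicate, case is the base case $t \succ u$, which I would handle by examining each of the three generating rules. For the two recursor rules the verification is purely structural: for example $(\Rec \, t \, u \, 0)[\theta] = \Rec \, (t[\theta]) \, (u[\theta]) \, 0 \succ t[\theta]$, so the two sides are related by a single reduction step and hence by $\cong$ (as $\succ \subseteq \cong$); the successor-recursor rule is similar. The main obstacle is the $\beta$-rule $(\lambda x . t) \, u \succ t[x ::= u]$, where substitution interacts non-trivially with binding. Writing $\theta'$ for the restriction of $\theta$ that omits $x$, and assuming by the usual $\alpha$-conversion convention that $x$ is not free in any $\theta(y)$, we have $((\lambda x . t) \, u)[\theta] = (\lambda x . \, t[\theta']) \, (u[\theta]) \succ t[\theta'][x ::= u[\theta]]$. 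To conclude I would invoke the substitution lemma $t[\theta'][x ::= u[\theta]] = (t[x ::= u])[\theta]$, so that this reduct coincides syntactically with the right-hand side of the goal, and a single reduction step then yields $((\lambda x . t) \, u)[\theta] \cong (t[x ::= u])[\theta]$.

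Thus the only real content beyond the routine closure bookkeeping is the substitution lemma establishing that composition of substitutions commutes as required; the freshness side-conditions it relies on are exactly those guaranteed by working up to $\alpha$-equivalence. I expect this $\beta$-case, together with the care needed to keep the variable conventions consistent, to be the crux of the argument, the other cases being immediate from the inductive definition of $\cong$.
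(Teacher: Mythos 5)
Your proof is correct: the paper states this as a Fact without giving any proof (it is one of several standard metatheoretical properties taken for granted), and your argument---induction on the derivation of $\cong$, closing the reflexivity/symmetry/transitivity and congruence cases routinely, and resolving the $\beta$-case via the substitution composition lemma $t[\theta'][x ::= u[\theta]] = (t[x ::= u])[\theta]$ under the usual $\alpha$-conversion conventions---is exactly the standard argument the paper implicitly relies on. Nothing in your write-up deviates from or falls short of what a full proof of this fact requires.
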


\subsection{Higher Type Arithmetic}
\label{subsec:HAw}
%\addcontentsline{toc}{subsection}{\nameref{subsec:HAw}}

Higher Type Arithmetic (\HAw{}) is a theory of many-sorted first
order logic.
It is a conservative extension of $\mathbf{HA}$ obtained by 
extending the term language to the System~\T{}. 
Models of \HAw{} are described in the book of Troelstra~\cite{tro73}, in 
particular
the following will be used in the sequel:
\begin{enumerate}
    \item the set-theoretic model \M{} defined by
    $$\begin{array}{rcl}
             \M_{\Nat} & \equiv & \mathds{N}  \\
             \M_{\sigma \rightarrow \tau} & \equiv &
             \M_{\tau}^{\M_{\sigma}}
        \end{array}$$
        \item the model of Hereditary Recursive Operations~\HRO{} 
    defined by
    $$\begin{array}{rcl}
             \HRO_{\Nat} & \equiv & \mathds{N}  \\
             \HRO_{\sigma \rightarrow \tau} & \equiv &
             \{e \in \mathds{N} \ \vert \ \forall n \in 
             \HRO_{\sigma} \ \{e\}(n) \downarrow \in \HRO_{\tau} \}
        \end{array}$$
where $\{e\}(n) \downarrow \in E$ means that the computation 
of 
    the function of index $e$ terminates on the input $n$ and 
    that the result of this computation is in $E$.
\end{enumerate}
We define a proof system \LHAw{} that
captures \HAw{}. Formulas, proof terms and contexts of~
\LHAw{} are generated by the following grammar: 
$$\begin{array}{l@{\hspace*{1cm}}rcl}\textrm{Formulas} & \Phi,\Psi &::= & 
    t = u \ \vert \ \bot \ \vert \ \nullt (t) \\
 &          &      & 
    \vert \ \Phi \Rightarrow \Psi \ \vert \ \Phi \wedge \Psi \\ 
&          &      & 
    \vert \ \forall x^{\sigma} \Phi  \ \vert 
        \ \exists x^{\sigma} \Phi \\[0.2cm]
\textrm{Proof terms}& M,N &::=  & 
    \xi \ \vert \ \refle \, t \ \vert \ \peel^{t,u}(M,\hat{x}.\Phi, N) 
        \ \vert \ \efq(M,\Phi) \\
 &     &      & 
    \vert \ \lambda \xi. M \ \vert \ M \, N \\
&     &      & 
    \vert \ (M,N) \ \vert M.1 \ \vert \ M.2 \\
&     &      & 
    \vert \ \lambda x^{\sigma}.M \ \vert \ M \, t \\
&     &      & 
    \vert \ [t,M] \ \vert \ \letc \ [x , \xi ]:= M \ \inc \ N \\
&	  &      & 
    \vert \ \Ind (\hat{x}. \Phi , M , N ,t) \\[0.2cm]
\textrm{Contexts} & \Gamma &::= & 
    \emptyset \ \vert \ \Gamma , \xi: \Phi
	\end{array}$$

 \begin{figure}[t]
	\centering
	\begin{displaymath}
	\begin{array}{cc}

	\multicolumn{2}{c}{
		\begin{array}{ccc}
	\prftree[r]{\footnotesize$(\xi: \phi \in \Gamma)$}
	    {\prfassumption{(\Delta ; \Gamma) \ \wfp}}
	    {\Delta ; \Gamma \vdash \xi: \Phi} &
	
		\prftree[r]{\footnotesize$(\FV(\Phi) \subseteq \Delta)$}
			{\prfassumption{\Delta ; \Gamma \vdash M: \bot}}
			{\Delta ; \Gamma \vdash \efq (M,\Phi): \Phi}  &
					
		\prftree[r]{\footnotesize$(\Phi \simeq \Psi)$}
			{\prfassumption{\Delta ; \Gamma \vdash M: \Phi}}
			{\Delta ; \Gamma \vdash M: \Psi } 
		\end{array} 
	}
		\\[0.2cm]

	\prftree
		{\prfassumption
		{ \Delta ; \Gamma , \xi: \Phi \vdash M: \Psi}}
		{\Delta; \Gamma \vdash \lambda \xi.M: \Phi \Rightarrow \Psi } 
					& 
	\prftree
		{\prfassumption{ \Delta ; \Gamma \vdash M: \Phi \Rightarrow \Psi}}
		{\prfassumption{ \Delta ; \Gamma \vdash N: \Phi}}
		{\Delta ; \Gamma \vdash M \, N: \Psi } 
		\\[0.2cm] 
	
	\prftree
		{\prfassumption{ \Delta ; \Gamma \vdash M_1: \Phi_1}}
		{\prfassumption { \Delta ; \Gamma \vdash M_2: \Phi_2}}
		{\Delta ; \Gamma \vdash (M_1, M_2): \Phi_1 \wedge \Phi_2}
					&
	\prftree[r]{\footnotesize$(i=1,2)$}
		{\prfassumption {\Delta ; \Gamma \vdash M: \Phi_1 \wedge \Phi_2}}
		{\Delta ; \Gamma \vdash M.i: \Phi_i} 
		\\[0.2cm]
    
	\prftree[r]{\footnotesize$(x^{\sigma} \notin \FV(\Gamma))$}
		{\prfassumption{ \Delta, x^{\sigma} ; \Gamma \vdash M: \Phi}}
		{\Delta; \Gamma \vdash \lambda x^{\sigma}. M: \forall x^{\sigma}  \Phi } 
					&
	\prftree
		{\prfassumption{ \Delta ; \Gamma \vdash M:  \forall x^{\sigma} \Phi}}
		{\prfassumption { \Delta \vdash_{\T} t: \sigma}}
		{\Delta ; \Gamma \vdash M \, t: \Phi[x^{\sigma}:= t] } \\[0.2cm] 

	\prftree
		{\prfassumption{ \Delta; \Gamma \vdash M:  \Phi[x^{\sigma}:= t]}}
		{\prfassumption { \Delta \vdash_{\T} t: \sigma}}
		{\Delta ; \Gamma \vdash [t,M]:  \exists x^{\sigma}  \Phi}
					&
	\prftree[r]{\footnotesize$(x^{\sigma} \notin \FV(\Gamma,\Psi))$}
		{\prfassumption{ \Delta ; \Gamma \vdash M: \exists x^{\sigma} \Phi }}
		{\prfassumption{ \Delta, x^{\sigma} ; \Gamma, \xi: \Phi \vdash N: \Psi}}
		{\Delta ; \Gamma \vdash \letc \ [x, \xi ]:= M \ \inc \ N: \Psi} \\[0.2cm]
    
	\prftree
	    {\prfassumption{(\Delta ; \Gamma) \ \wfp}}
		{\prfassumption{ \Delta \vdash_{\T} t: \Nat}}
		{\Delta ; \Gamma \vdash \refle \, t: t = t } 
					&   
	\prftree
		{\prfassumption{ \Delta ; \Gamma \vdash M: t = u}}
		{\prfassumption{ \Delta ; \Gamma \vdash N: \Phi[x^{\Nat}: = t]}}
		{\Delta ; \Gamma \vdash \peel^{t,u}(M,\hat{x}.\Phi, N): \Phi[x^{\Nat}:= u] } \\[0.2cm]
	
	\multicolumn{2}{c}{
		\prftree
			{\prfassumption {\Delta ; \Gamma \vdash M: \Phi[x^{\Nat}:= 0]}}
			{\prfassumption {\Delta ; \Gamma \vdash N: \forall x^{\Nat}. (\Phi \Rightarrow \Phi[x^{\Nat}:= \succe \, x^{\Nat}])}}
			{\prfassumption {\Delta \vdash_{\T} t: \Nat}}
			{\Delta ; \Gamma \vdash \Ind (\hat{x}. \Phi, M, N, t): \Phi[x:=t]}
	}
	\end{array}
	\end{displaymath}
	\caption{\label{fig2} Proof derivations in $\LHAw$}
	\end{figure}
This
syntax contains three different $\lambda$-abstractions: 
two $\lambda$-abstractions at the level of proof terms ($\lambda \xi.M$ and
$\lambda x^{\sigma}.M$)
and the $\lambda$-abstraction of System~\T{} at the level
of formulas ($\lambda x^{\sigma}.t$). The sort annotation on a variable may be
omitted in the sequel if it can be inferred.
In the proof terms
$\peel^{t,u}(M,\hat{x}.\Phi, N)$ and $\Ind (\hat{x}. \Phi, M, N, t)$,
the variable $x$ is bound in $\Phi$: the binder $\hat{x}$ is used to specify
which variable will be substituted. The connectives 
$\top$ and $\vee$ are not included in \LHAw{} but can be
defined as $\top \equiv \bot \Rightarrow \bot$ 
and~$\Phi \vee \Psi \equiv \exists x^{\Nat} \,
(x = 0 \Rightarrow \Phi \wedge x \neq 0 \Rightarrow \Psi)$
where the relation $x \neq y$ denotes~$x = y \Rightarrow \bot$.

We consider sequents of the form
		$\Delta ; \Gamma \vdash M:\Phi$
	where 
	\begin{enumerate}
		\item $\Delta$ is a signature of System~\T{};
		\item $\Gamma$ is a context of \LHAw{}.
	\end{enumerate}

The typing rules of \LHAw{} are presented in Figure~\ref{fig2} at
page~\pageref{fig2}.
Note that equality is only defined on the sort $\Nat$.
A pair of a signature and a context $(\Delta;\Gamma)$
is well formed when the free first-order variables of 
$\Gamma$~are contained in $\Delta$, i.e. 
$$\begin{array}{rcrcl}
     (\Delta;\Gamma) \ \wfp & \equiv &
     \FV(\Gamma) & \subseteq & \Delta
    \end{array}$$
This system is not equipped with
reduction rules for proof terms: they are used
as annotations for the derivation and they serve as a tool
to formulate our work as a fully specified translation.
The congruence relation
    $ \Phi \simeq \Psi$
between formulas used in \LHAw{} is generated from the reduction
rules of System~\T{} and two extra rules:
$$\begin{array}{rcl}
         \nullt(0) & \succ & \top  \\
         \nullt(\succe \, x) & \succ & \bot
    \end{array}$$
Because of the conversion rule, terms of System~\T{} are treated up to the equivalence $\cong$. For instance, one
can prove $(\lambda x^{\Nat}.x) 0 = 0$ in \LHAw{}.
Moreover, all the axioms of \HAw{}~\cite{tro73}
are derivable. In particular,
the predicate $\nullt (t)$ is used to prove 
    $\forall x^{\Nat} \ \succe \, x \neq 0 $.

\begin{fact}
\LHAw{} captures \HAw{} in the following manner:
a closed formula $\Phi$ is derivable in \LHAw{}
if and only if the formula obtained from $\Phi$
by replacing all occurrences of  subformulas
$\nullt(t)$ by $t=0$ is a logical consequence
of \HAw{}.
\end{fact}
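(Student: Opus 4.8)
The plan is to prove both implications by induction, after observing that, once the proof-term annotations are erased, \LHAw{} is precisely a natural-deduction presentation of many-sorted intuitionistic first-order logic over the term language of System~\T{}, while \HAw{} is the same logic equipped with the Peano and System~\T{} axioms. Write $\Phi^{*}$ for the formula obtained from $\Phi$ by replacing every subformula $\nullt(t)$ with $t = 0$; this erasure commutes with all connectives, quantifiers and substitutions, so it suffices to track how it interacts with the two features special to \LHAw{}: the conversion rule $(\Phi \simeq \Psi)$ and the typing rules mentioning $\nullt$.

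For the left-to-right direction (soundness) I would argue by induction on the derivation of $\Delta; \Gamma \vdash M : \Phi$, showing that $\Phi^{*}$ follows from \HAw{} together with $\Gamma^{*}$. Each logical rule of Figure~\ref{fig2} (the hypothesis rule, $\efq$, and the introduction and elimination rules for $\Rightarrow$, $\wedge$, $\forall$ and $\exists$) maps to the corresponding inference of intuitionistic predicate logic, while $\refle$, $\peel$ and $\Ind$ map to reflexivity, the substitutivity of equality at $\Nat$, and the induction scheme, all of which are among the axioms of \HAw{}. The only genuinely new rule is the conversion rule: here I would use that whenever $t \cong u$ with $t, u : \Nat$, the equational axioms of System~\T{} let \HAw{} prove $t = u$, so that $\simeq$ between formulas is reflected by provable equivalence after erasure; the two $\nullt$-reductions are handled by noting that $\nullt(0)^{*}$ is $0 = 0$ (provable) and $\nullt(\succe\,t)^{*}$ is $\succe\,t = 0$ (refutable, using the Peano axiom $\succe\,x \neq 0$).

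For the right-to-left direction (completeness) I would rely on two facts. First, \LHAw{} is complete for the underlying many-sorted intuitionistic predicate calculus, and all axioms of \HAw{} are themselves derivable in \LHAw{} (as already noted in the excerpt, with $\nullt$ providing $\forall x^{\Nat}\,\succe\,x \neq 0$), so any consequence of \HAw{} can be re-derived once its axioms and hypotheses are available. Second, to bridge $\nullt$ and $t = 0$ I would prove in \LHAw{}, by an application of $\Ind$, the equivalence $\forall x^{\Nat}(\nullt(x) \Leftrightarrow x = 0)$: the base case holds because $\nullt(0) \simeq \top$, and the successor case because $\nullt(\succe\,x) \simeq \bot$ matches $\succe\,x \neq 0$. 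Instantiating at an arbitrary term shows that $\Phi$ and $\Phi^{*}$ are interderivable in \LHAw{}, so a \HAw{}-derivation of $\Phi^{*}$ can be reassembled into a \LHAw{}-derivation of $\Phi$.

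The main obstacle I expect is the conversion rule in the soundness direction: because \LHAw{} carries equality only at the sort $\Nat$, whereas the congruence $\simeq$ identifies System~\T{} terms at every sort, one must check that conversions occurring inside higher-type subterms still yield provable equalities. This works because every atomic formula is either an equality or a $\nullt$ atom at sort $\Nat$, so any higher-type conversion takes place within a context that ultimately produces a $\Nat$-term, and the congruence closure of the System~\T{} equations—available in \HAw{}—delivers the required equality at $\Nat$. Making this reduction-to-$\Nat$ argument precise, together with a careful statement of the erasure on contexts $\Gamma^{*}$, is the technical heart of the proof.
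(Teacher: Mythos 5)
The paper never proves this statement: it is one of the unproven ``Facts'' (the paper treats it as routine background), so there is no official proof to compare yours against. Your overall plan is the natural one and, I believe, the intended one: soundness by induction on the derivations of Figure~\ref{fig2}, with the logical rules mapped to intuitionistic inferences and $\refle$, $\peel$, $\Ind$ mapped to the equality and induction axioms of \HAw{}; completeness by observing that \LHAw{} contains full many-sorted intuitionistic natural deduction, that the axioms of \HAw{} are derivable in it, and that the bridge $\forall x^{\Nat}(\nullt(x) \Leftrightarrow x = 0)$ is provable with $\Ind$ (note your base and successor cases do not even need the induction hypothesis, but induction is genuinely needed because $\nullt(x)$ does not convert for a variable $x$). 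That direction is fine, including the propagation of the equivalence through connectives and quantifiers.

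The soft spot is exactly the step you flag as the ``technical heart,'' and your proposed resolution of it does not work as stated. You argue that a conversion inside a higher-type subterm is harmless because it ``takes place within a context that ultimately produces a $\Nat$-term,'' so that ``the congruence closure of the System~\T{} equations---available in \HAw{}---delivers the required equality at $\Nat$.'' But that availability is precisely what is at issue, and it cannot be \emph{derived} from the defining equations instantiated at sort $\Nat$: with equality confined to $\Nat$ there is no congruence rule for higher-type argument positions, since the needed premise (an equality between higher-type terms) is not even expressible. Concretely, \LHAw{} proves $\forall f \forall g\ f((\lambda x^{\Nat\rightarrow\Nat}.x)\,g) = f\,g$ in one step ($\refle$ plus conversion), but this formula fails in a model built from \HRO{} in which $\lambda$-abstractions of higher type are interpreted by ``padded'' indices (same computed function, different code): all applied defining equations at sort $\Nat$ survive padding, yet an index $f$ for the identity on codes distinguishes $(\lambda x.x)\,g$ from $g$. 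So the correct repair is not a context argument but a statement about the axiomatization: one must take \HAw{}, as in Troelstra's formulation with equality only at type $0$, to include $\lambda$-conversion as a schema already closed under congruence, i.e.\ $t = u$ is an axiom for all convertible $t \cong u$ of sort $\Nat$; with that reading your conversion case is immediate, and without it the Fact itself would be false. A similar (milder) precision is needed for $\peel$: you need the full replacement schema of first-order logic with equality, in which the substituted $\Nat$-variable may occur under $\lambda$ and inside higher-type argument positions, not merely congruence with respect to function symbols.
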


The notion of substitutions extends from System~\T{} to
\LHAw{}. Concretely, a (first-order)
substitution~$\theta$ is a finite function from first-order
variables ($x,y$...) to terms of System~\T{}, while the
operation of substitution on proof terms $M$ and
formulas $\Phi$ is defined as before. The notation
$\Gamma[\theta]$ represents the application of the
substitution $\theta$ to all terms and formulas in
the context $\Gamma$.
The system $\LHAw$ satisfies the following properties:

\begin{fact} 
\label{fact3}
If~$\Delta ; \Gamma  \vdash  M: \Phi$~then~ $FV(\Phi)  \subseteq  \Delta$.
\end{fact}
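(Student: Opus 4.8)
The plan is to argue by structural induction on the derivation of $\Delta ; \Gamma \vdash M : \Phi$, inspecting each rule of Figure~\ref{fig2}. A direct induction on the statement as given is not quite self-supporting, however: the implication-introduction rule concludes with $\Phi \Rightarrow \Psi$ from a premise typed by $\Psi$ alone, so the induction hypothesis says nothing about the free variables of the discharged hypothesis $\Phi$. I would therefore strengthen the statement and prove, simultaneously by the same induction, the auxiliary invariant that every derivable judgment has a well-formed pair, i.e.\ $\Delta ; \Gamma \vdash M : \Phi$ implies $(\Delta ; \Gamma)\ \wfp$, that is $\FV(\Gamma) \subseteq \Delta$. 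This invariant is itself routine: the variable and reflexivity rules carry $(\Delta ; \Gamma)\ \wfp$ as an explicit premise, the purely logical rules leave the pair $(\Delta;\Gamma)$ unchanged, and the two rules that enlarge the signature are controlled by their side conditions — for $\lambda x^{\sigma}.M$ and for $\letc$ the eigenvariable condition $x^{\sigma}\notin\FV(\Gamma)$ removes the freshly bound variable, while for $\lambda \xi.M$ the premise's context $\Gamma,\xi:\Phi$ is well formed, whence so is $\Gamma$.

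With the invariant in hand, the cases of the main statement split into three groups. \emph{(i) The leaf and propositional rules.} For the variable rule, $\xi:\Phi\in\Gamma$ together with $\FV(\Gamma)\subseteq\Delta$ gives $\FV(\Phi)\subseteq\Delta$; for $\efq$ the containment is exactly the side condition $\FV(\Phi)\subseteq\Delta$; for the elimination rule of $\Rightarrow$, for conjunction introduction and elimination, and for reflexivity, the free variables of the conclusion are contained in those of the premises — respectively in those of the System~\T{} term $t$, whose free variables lie in $\Delta$ by the typing premise $\Delta\vdash_{\T} t:\Nat$ — so the induction hypothesis closes them immediately.

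\emph{(ii) The quantifier and substitution rules.} For $\lambda x^{\sigma}.M$ the induction hypothesis gives $\FV(\Phi)\subseteq\Delta,x^{\sigma}$, so $\FV(\forall x^{\sigma}\Phi)=\FV(\Phi)\setminus\{x\}\subseteq\Delta$; the $\letc$ rule is analogous, using additionally the side condition $x^{\sigma}\notin\FV(\Psi)$. For the rules producing a substituted formula — $\forall$-elimination, $\exists$-introduction, $\peel$ and $\Ind$ — I would use the elementary inclusions $\FV(\Phi)\setminus\{x\}\subseteq\FV(\Phi[x:=t])$ and $\FV(\Phi[x:=u])\subseteq(\FV(\Phi)\setminus\{x\})\cup\FV(u)$, combined with $\FV(t),\FV(u)\subseteq\Delta$ from the System~\T{} typing premises, to transport the containment across the substitution. \emph{(iii) The implication-introduction rule} is where the strengthened invariant pays off: the premise $\Delta;\Gamma,\xi:\Phi\vdash M:\Psi$ is well formed, so $\FV(\Phi)\subseteq\Delta$, and the induction hypothesis gives $\FV(\Psi)\subseteq\Delta$; hence $\FV(\Phi\Rightarrow\Psi)\subseteq\Delta$.

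I expect the genuinely delicate case to be the conversion rule, which passes from $\Phi$ to any $\Psi$ with $\Phi\simeq\Psi$. Reductions never create free variables, but since $\simeq$ is an \emph{equivalence} one may also expand — e.g.\ $t\simeq\Rec^{\Nat}\, t\, u\, 0$, where $u$ may carry free variables absent from $t$ — so an uncontrolled conversion could in principle introduce a variable outside $\Delta$. The key point to pin down is therefore that the congruence used in the conversion rule stays within the formulas that are well formed over $\Delta$, so that $\FV(\Phi)\subseteq\Delta$ is preserved under $\simeq$; making this precise, either by restricting $\simeq$ to formulas all of whose System~\T{} subterms are typable in $\Delta$ or by carrying well-formedness of the conclusion formula as a further clause of the invariant, is the main obstacle, and the remainder of the induction is bookkeeping.
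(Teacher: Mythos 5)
Your overall route is the intended one: the paper states this as an unproved Fact, and the natural proof is exactly your structural induction on the derivation, strengthened with the invariant that every derivable judgment has a well-formed pair $(\Delta;\Gamma)\ \wfp$. That strengthening is genuinely needed for the $\Rightarrow$-introduction case, for precisely the reason you give, and your treatment of the quantifier, substitution and eigenvariable cases is correct. One small slip: the $\peel$ rule has no System~\T{} typing premises, so in that case the containment $\FV(u)\subseteq\Delta$ does not come from a premise $\Delta\vdash_{\T} u:\Nat$ but from the induction hypothesis applied to the premise $M: t=u$, whose formula has free variables $\FV(t)\cup\FV(u)$; the bookkeeping is otherwise as you describe.

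Your worry about the conversion rule is the one substantive point, and you are right that it is not a technicality: as literally written, the rule makes the Fact false, so no amount of care in the induction can close that case. Since $\simeq$ is a congruent equivalence it admits expansions, e.g. $0 \simeq (\lambda x^{\Nat}.0)\,z$, hence $0=0 \simeq \bigl((\lambda x^{\Nat}.0)\,z\bigr) = 0$, and conversion applied to $\vdash \refle\,0 : 0=0$ yields a judgment whose formula has the free variable $z$ outside the (empty) signature. Consequently, of your two proposed repairs only the first can work: restricting the conversion rule, say by the implicit side condition $\FV(\Psi)\subseteq\Delta$ (exactly parallel to the explicit side condition on $\efq$), or equivalently by allowing $\simeq$ only between formulas well-scoped over $\Delta$. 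Your second repair, adding well-formedness of the conclusion formula as a further clause of the induction invariant, cannot succeed, because one cannot prove by a strengthened induction an invariant that an unrestricted rule genuinely violates; the rule itself must be read restrictively, and this is evidently what the paper intends since it asserts the Fact. Note also that once this reading is adopted, the conversion case is not ``the main obstacle'' but the easiest case of all: the side condition is verbatim the conclusion to be proved. So your proof is complete modulo making that convention on the rule explicit.
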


\begin{fact}
A generalized version of the weakening rule is admissible for
this system: 
    \begin{center}
        \begin{tabular}{crcl}
             if & $\Delta \subseteq \Delta'$,  
             $\Gamma \subseteq \Gamma'$
             and 
             $\Delta ; \Gamma \vdash M: \Phi$ & then &
             $\Delta' ; \Gamma' \vdash M: \Phi$
        \end{tabular}
    \end{center}
    where the set-theoretic inclusion is used to compare
    signatures and contexts.
\end{fact}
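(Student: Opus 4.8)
The plan is to argue by induction on the derivation of $\Delta ; \Gamma \vdash M : \Phi$, showing that each typing rule of Figure~\ref{fig2} is preserved when the signature is enlarged from $\Delta$ to $\Delta'$ and the context from $\Gamma$ to $\Gamma'$. One caveat should be flagged first: the variable and reflexivity rules carry the side condition $(\Delta ; \Gamma)\ \wfp$, i.e. $\FV(\Gamma) \subseteq \Delta$, so reconstructing them for the enlarged pair requires $(\Delta' ; \Gamma')\ \wfp$. I therefore read $\Gamma'$ as an extension for which $(\Delta' ; \Gamma')$ remains well formed, the natural hypothesis here; this holds automatically whenever the formulas added to $\Gamma$ have their free variables in $\Delta'$. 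Granting this, the two leaf cases are immediate: for the variable rule, $\xi : \Phi \in \Gamma \subseteq \Gamma'$ gives $\xi : \Phi \in \Gamma'$, and the well-formedness premise is the assumed one; for $\refle\, t$ the premise $\Delta \vdash_{\T} t : \Nat$ is lifted to $\Delta' \vdash_{\T} t : \Nat$ by the weakening fact for System~\T{} (Fact~1), and again the well-formedness premise is available.

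Next I handle the rules with no binding, namely $\Rightarrow$-elimination, $\wedge$-introduction and elimination, $\efq$, the conversion rule $\Phi \simeq \Psi$, $\peel$ and $\Ind$. In each of these the context and signature are unchanged across the premises, so I simply apply the induction hypothesis to every premise and re-apply the same rule; the side conditions attached to the conclusion survive because they only constrain $\Delta$ from below. For instance $\efq$ requires $\FV(\Phi) \subseteq \Delta$, whence $\FV(\Phi) \subseteq \Delta'$ since $\Delta \subseteq \Delta'$. Whenever a premise is a System~\T{} judgment $\Delta \vdash_{\T} t : \sigma$ (as in $\forall$-elimination, $\exists$-introduction, $\Ind$ and $\refle$), I lift it to $\Delta' \vdash_{\T} t : \sigma$ using Fact~1.

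The genuinely delicate cases are the binding rules: $\forall$-introduction and $\exists$-elimination at the first-order level, together with the freshness of the proof variable $\xi$ implicit in $\Rightarrow$-introduction and $\exists$-elimination. In $\forall$-introduction the premise is $\Delta, x^{\sigma} ; \Gamma \vdash M : \Phi$ under the condition $x^{\sigma} \notin \FV(\Gamma)$; to re-apply the rule over $\Gamma'$ I need $x^{\sigma} \notin \FV(\Gamma')$, which may now fail, and moreover $x^{\sigma}$ might already occur in $\Delta'$, so that $\Delta', x^{\sigma}$ is not a legitimate signature. The standard remedy, and what I expect to be the main obstacle to make fully rigorous, is to treat derivations up to renaming of bound variables: before invoking the induction hypothesis I rename the bound first-order variable $x$ (respectively the bound proof variable $\xi$) to a variable chosen fresh for $\Delta'$, $\Gamma'$ and $\Psi$. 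This rests on an auxiliary renaming lemma stating that a permutation of bound variables transforms a derivation into a derivation of an $\alpha$-equivalent conclusion. With the fresh choice in place, $x^{\sigma} \notin \FV(\Gamma')$ (respectively $x^{\sigma} \notin \FV(\Gamma', \Psi)$ and $\xi$ fresh for $\Gamma'$) is restored, the induction hypothesis applies to the premise over the enlarged signature $\Delta', x^{\sigma}$ and context $\Gamma'$ (respectively $\Gamma', \xi : \Phi$), and the rule is reinstated to yield $\Delta' ; \Gamma' \vdash M : \Phi$.
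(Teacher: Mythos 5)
This fact is stated in the paper without proof, as a routine admissibility result; your induction on the derivation, with Fact~1 used to lift the System~\T{} premises to the larger signature, is exactly the standard argument the paper leaves implicit, and it is correct. The two caveats you identify are also the right ones: the statement tacitly assumes $(\Delta' ; \Gamma')\ \wfp$ (otherwise nothing at all is derivable in that pair, since every leaf rule enforces $\FV(\Gamma) \subseteq \Delta$), and the binding rules ($\forall$-introduction, $\exists$-elimination, and the proof-variable binders) only go through when derivations are treated up to renaming of bound variables, i.e.\ under the usual $\alpha$-convention, with the renaming lemma stated so that it preserves the size of derivations and the induction hypothesis still applies.
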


\begin{fact}
    Let $\theta$ be a substitution of
    first-order variables, $\Delta$ a signature
    included in its domain and $\Delta'$ a signature
    containing all free variables of its image.
    Then
        \begin{center}
        \begin{tabular}{rcl}
             $\Delta ; \Gamma \vdash M: \Phi$ & implies &
             $\Delta' ; \Gamma[\theta] \vdash 
             M[\theta]: \Phi[\theta]$
        \end{tabular}
    \end{center}
\end{fact}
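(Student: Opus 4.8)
The plan is to argue by induction on the derivation of $\Delta ; \Gamma \vdash M: \Phi$, showing for each rule of Figure~\ref{fig2} that its conclusion is stable under the first-order substitution $\theta$. Throughout I would adopt the variable convention: whenever a rule binds a first-order variable (as in $\lambda x^{\sigma}.M$, $\letc$, $\peel$ and $\Ind$), I assume by $\alpha$-renaming that the bound variable $x$ occurs neither in the domain of $\theta$ nor among the free variables of its image. This makes the usual substitution-commutation identity available, namely $\Phi[x := t][\theta] = \Phi[\theta][x := t[\theta]]$ when $x$ is fresh for $\theta$, which is exactly what is needed to match the substituted formulas produced by the $\forall$-elimination, $\exists$-introduction, $\peel$ and $\Ind$ rules.

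Before the induction I would record two auxiliary facts. First, the analogous substitution property for System~\T{}: if $\Delta \vdash_{\T} t: \sigma$ then $\Delta' \vdash_{\T} t[\theta]: \sigma$, proved by the same routine induction on the System~\T{} typing derivation of Figure~\ref{fig1}; this discharges the System~\T{} premises $\Delta \vdash_{\T} t: \sigma$ occurring in the application, $\forall$-elimination, $\exists$-introduction, successor, $\refle$ and $\Ind$ rules. Second, stability of the formula congruence under substitution: $\Phi \simeq \Psi$ implies $\Phi[\theta] \simeq \Psi[\theta]$. This follows from Fact~2 together with the observation that the two extra rules $\nullt(0) \succ \top$ and $\nullt(\succe \, x) \succ \bot$ commute with $\theta$; it is what makes the conversion rule go through.

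For the remaining cases the argument is mechanical: in each rule I apply the induction hypothesis to the premises and reassemble the same rule, checking that its side conditions survive. The well-formedness and $\FV$ side conditions are handled uniformly by the hypotheses on $\Delta$ and $\Delta'$: since $\FV(\Gamma) \subseteq \Delta \subseteq \mathrm{dom}(\theta)$ and $\Delta'$ contains every free variable of the image of $\theta$, we get $\FV(\Gamma[\theta]) \subseteq \Delta'$, so $(\Delta' ; \Gamma[\theta]) \ \wfp$ whenever $(\Delta ; \Gamma) \ \wfp$; likewise the condition $\FV(\Phi) \subseteq \Delta'$ of $\efq$ follows from $\FV(\Phi) \subseteq \Delta$ and the same hypotheses on $\theta$. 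In the binder cases I extend $\theta$ to $\theta' = \theta \cup \{x \mapsto x\}$ on the premise typed in $\Delta, x^{\sigma}$; because $x$ was chosen fresh, the domain of $\theta'$ covers $\Delta, x^{\sigma}$, the free variables of its image lie in $\Delta', x^{\sigma}$, and $\theta'$ acts as the identity on $x$, so the induction hypothesis yields exactly the substituted premise, while the freshness of $x$ guarantees the eigenvariable conditions $x \notin \FV(\Gamma[\theta])$ (and $x \notin \FV(\Gamma[\theta], \Psi[\theta])$ for $\letc$).

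The main obstacle is bookkeeping rather than conceptual: making the variable convention precise so that no capture occurs in the binder rules and so that the substitution-commutation identity holds on the nose. The one place where a genuine check is required is the conversion rule, where one must verify that $\simeq$ is preserved by $\theta$; everything else is a direct application of the induction hypothesis and the two auxiliary facts above.
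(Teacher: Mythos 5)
Your proposal is correct, and it is exactly the routine argument this Fact is implicitly relying on: the paper states it without proof, and the intended justification is precisely your induction on the derivation, using the System~\T{} substitution property, the stability of $\simeq$ under first-order substitution, and the freshness convention (extending $\theta$ with $x \mapsto x$) to handle the binder rules and their eigenvariable side conditions. Nothing is missing.
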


\begin{fact} If $\Delta ; \Gamma , \xi:\Psi \vdash M: \Phi $ 
             and 
             $\Delta ; \Gamma \vdash N: \Psi$  then 
             $\Delta ; \Gamma \vdash M[\xi:= N]: \Phi$.
\end{fact}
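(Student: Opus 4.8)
The plan is to proceed by structural induction on the derivation of the first hypothesis $\Delta ; \Gamma , \xi:\Psi \vdash M: \Phi$, carrying the second hypothesis $\Delta ; \Gamma \vdash N: \Psi$ through every case. Throughout I adopt the usual variable convention, so that every variable bound in the derivation — whether a first-order variable $x^{\sigma}$ or a proof variable $\zeta$ — is taken distinct from $\xi$ and from the free variables of $N$; this makes $[\xi:=N]$ commute with all binders and removes any risk of capture. Since the admissible weakening rule recalled above treats contexts up to set-theoretic inclusion, the position of $\xi:\Psi$ inside the context is immaterial (exchange is implicit), and I may always assume it sits wherever convenient.

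For the base cases, the axiom rule splits in two. If $M = \xi$, then $\Phi = \Psi$ (up to $\simeq$) and $M[\xi:=N] = N$, so the goal is exactly the second hypothesis. If $M = \zeta$ with $\zeta \neq \xi$, then $\zeta : \Phi \in \Gamma$ and $M[\xi:=N] = \zeta$, so it suffices to re-derive $\Delta ; \Gamma \vdash \zeta : \Phi$, which only needs $(\Delta ; \Gamma)\,\wfp$; this holds because $(\Delta ; \Gamma , \xi:\Psi)\,\wfp$ gives $\FV(\Gamma) \subseteq \FV(\Gamma , \xi:\Psi) \subseteq \Delta$. The $\refle\,t$ case is analogous: $M$ and $\Phi$ are untouched by $[\xi:=N]$, and again only $(\Delta ; \Gamma)\,\wfp$ is required.

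For the inductive cases, I apply the induction hypothesis to the premises and reapply the same rule, with two points of care. First, for implication introduction (which extends the context by a proof variable), for $\forall$-introduction (which extends the signature by a first-order variable), and for the second premise of $\letc$ (which extends both), the second hypothesis $\Delta ; \Gamma \vdash N : \Psi$ must be transported to the enlarged signature and context before the induction hypothesis becomes applicable; this is precisely the admissible generalized weakening. Second, the side conditions — of the form $x^{\sigma} \notin \FV(\Gamma)$ for $\forall$-introduction, $x^{\sigma} \notin \FV(\Gamma,\Psi)$ for $\letc$, and $\FV(\Phi) \subseteq \Delta$ for $\efq$ — are preserved verbatim, since $[\xi:=N]$ touches neither $\Gamma$ nor the governing formulas. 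The conversion rule is immediate, the proof term being unchanged. The remaining rules ($\peel$, $\Ind$, pairing and projections, $\exists$-introduction, $\forall$-elimination) carry no binding at the rule level and follow by a direct appeal to the induction hypothesis.

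The only genuinely delicate point is pushing $[\xi:=N]$ under the first-order binders, for instance justifying $(\lambda x^{\sigma}.M)[\xi:=N] = \lambda x^{\sigma}.(M[\xi:=N])$, which demands $x^{\sigma} \notin \FV(N)$. Under the variable convention the bound $x^{\sigma}$ is chosen fresh for $N$, so no capture occurs; concretely, $x^{\sigma}$ is new for the signature $\Delta$ over which $N$ is typed, whence it cannot occur free in $N$. With capture thus excluded in every binding case, each rule reconstructs its conclusion and the induction is complete.
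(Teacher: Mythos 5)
Your proof is correct: the paper states this substitution fact without proof (as one of several routine admissibility facts about \LHAw{}), and your structural induction on the derivation of $\Delta ; \Gamma , \xi:\Psi \vdash M: \Phi$ --- using the admissible generalized weakening to transport $N$ under extended signatures/contexts, and the variable convention to avoid capture at binders --- is exactly the standard argument the paper implicitly relies on.
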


\section{A preliminary study of possible extensions of equality}
\label{sec:PEE}
%\addcontentsline{toc}{section}{\nameref{sec:PEE}}

\subsection{Two examples of Partial Equivalence Relation}
\label{subsec:ExPER}
%\addcontentsline{toc}{subsection}{\nameref{subsec:ExPER}}

Let $\sigma$ be a sort of System~\T{}. 
A symbol of binary relation $\mathcal{R}$ on $\sigma$ (added to the syntax of $\LHAw$) is a partial equivalence relation
when it is symmetric and transitive. It is the case if the formulas
$$
\begin{array}{rcl}
     \Sym{}_{\mathcal{R}} & \equiv & 
     \forall x^{\sigma} \forall y^{\sigma} \ 
        x\mathcal{R} y \Rightarrow y \mathcal{R} x \\
   \Trans{}_{\mathcal{R}} & \equiv & 
     \forall x^{\sigma} \forall y^{\sigma} \forall z ^{\sigma} \ 
        x\mathcal{R} y \Rightarrow y \mathcal{R} z
          \Rightarrow x \mathcal{R} z
\end{array}
$$
are provable in $\LHAw$.

A partial equivalence relation is an equivalence relation on its
domain
$$
\begin{array}{rcl}
     x \in \Dom{}_{\mathcal{R}} & \equiv & x \mathcal{R} x
\end{array}
$$
Moreover, using symmetry and transitivity, one can show that 
$$x \mathcal{R} y   \Rightarrow 
     x \in \Dom{}_{\mathcal{R}} \wedge y \in \Dom{}_{\mathcal{R}}$$
Therefore, a partial equivalence relation on $\sigma$ 
is exactly an equivalence relation on a collection 
of individuals of sort $\sigma$
(i.e. a formula with one free variable of sort $\sigma$).

Let $\{\Ext_{\sigma}\}_{\sigma}$ and $\{\Eqpm_{\sigma}\}_{\sigma}$
be two families of binary relations indexed by the sorts of
System~\T{} and defined as follows:
$$
    \begin{array}{rcl@{\hspace{1.5cm}}rcl}
         x^{\Nat} \Ext_{\Nat} y^{\Nat} & \equiv & x = y &
         x^{\Nat} \Eqpm_{\Nat} y^{\Nat} & \equiv & x = y \\
         f^{\sigma \rightarrow \tau} \Ext_{\sigma \rightarrow \tau} 
         g^{\sigma \rightarrow \tau} & \equiv & 
         \forall x \ f \, x \Ext_{\tau} g \, x &
         f^{\sigma \rightarrow \tau} 
         \Eqpm_{\sigma \rightarrow \tau} 
         g^{\sigma \rightarrow \tau} & \equiv & 
         \forall x \forall y \ 
         x \Eqpm_{\sigma} y \Rightarrow
         f \, x \Eqpm_{\tau} g \, y
    \end{array}$$
Note that
\begin{enumerate}
    \item The relation \Ext{} is obtained from equality by 
    extending it to higher sorts in an extensional fashion 
    (two functions are in \Ext{} if they are extensionally equal).
    \item The relation \Eqpm{} is obtained from equality by
    extending it to higher sorts in a parametric fashion 
    (two functions are in \Eqpm{} if they send related
    entries to related outputs).
\end{enumerate}

With an (external) induction on the sorts of System~\T{}, it can
be shown that for all sorts $\sigma$:
\begin{enumerate}
    \item $\Ext_{\sigma}$ is an equivalence relation;
    \item $\Eqpm_{\sigma}$ is a partial equivalence relation.
\end{enumerate}
We exhibit proof terms
that are used on forthcoming translations:
$$\begin{array}{rccl}
        \vdash & \sympm_{\sigma}&:&   \Sym_{\Eqpm_{\sigma}} \\
        \vdash & \transpm_{\sigma} &: & \Trans_{\Eqpm_{\sigma}} \\
        \vdash & \reflpm_{\sigma} &: &
        \forall x^{\sigma} \forall y^{\sigma} \ x \Eqpm_{\sigma} y
    \Rightarrow (x \Eqpm_{\sigma} x \wedge y \Eqpm_{\sigma} y)
\end{array}$$
They are defined by induction on the sort of System~\T{}:
    $$\begin{array}{rcl}
         \sympm_{\Nat}& \equiv & \lambda x \lambda y. \lambda \xi.
         \peel(\xi, \hat{z} . (z = x), \refle \, x) 
         \\
         \sympm_{\sigma \rightarrow \tau }& \equiv & 
         \lambda f \lambda g. \lambda \xi. \lambda x \lambda y. 
         \lambda \eta.
         \sympm_{\tau} (f \,y) (g\, x) (\xi \, y \, x \, 
          (\sympm_{\sigma} \, x \, y \, \eta))
         \\
         \transpm_{\Nat} & \equiv &
         \lambda x \lambda y \lambda z . \lambda \xi \lambda  \eta. 
         \peel( \eta , \hat{w}. x = w , \xi) 
         \\ 
         \transpm_{\sigma \rightarrow \tau} & \equiv &
         \lambda f \lambda g \lambda h . \lambda \xi \lambda \eta. 
         \lambda x \lambda y. \lambda \chi.  
         \trans_{\tau} (f \, x) (g \, y) (h \, y) 
         (\xi \, x \, y \, \chi) (\eta \, y \, y 
         (\trans_{\sigma} \, y \, x \, y 
         (\sympm_{\sigma} x \, y \, \chi) \chi)) \\
         \reflpm_{\sigma} & \equiv &
         \lambda x^{\sigma} \lambda  y^{\sigma} \lambda \xi.
         (\transpm_{\sigma} \, x \, y \, x \, \xi 
         (\sympm \, x \, y \, \xi),
         \transpm_{\sigma} \, y \, x \, y \, 
         (\sympm \, x \, y \, \xi)
         \xi)
    \end{array}$$
One cannot prove inside \LHAw{} that $\Eqpm_{\sigma}$
is reflexive for all sorts $\sigma$,
as it can be seen by working in~$\HRO$. Let 
$\Hquote \in \HRO_{(\Nat \rightarrow \Nat) \rightarrow \Nat}$
be an index for the identity function\footnote{$\Hquote$ is a 
functional that takes a function as argument and returns its source code}
and 
$p,q \in \HRO_{\Nat \rightarrow \Nat}$ 
two distinct indexes for the same total unary function.
Note that
$$\begin{array}{rcl}
         \HRO & \vDash & p \Eqpm_{\Nat \rightarrow \Nat} q  \\
         \HRO & \vDash & 
         \{\Hquote\}(p) \neq_{\Nat}^{\mathrm{pm}} \{\Hquote\}(q)
    \end{array}$$
    Consequently
$$\begin{array}{rcl} 
         \HRO & \vDash & 
 \Hquote \neq^{\mathrm{pm}}_{(\Nat \rightarrow \Nat) \rightarrow \Nat} 
 \Hquote
    \end{array}$$
and $\Eqpm_{(\Nat \rightarrow \Nat) \rightarrow \Nat}$ is not
reflexive in~\HRO{}.

Because $\Ext_{(\Nat \rightarrow \Nat) \rightarrow \Nat}$ is 
reflexive, the previous result shows that in $\HRO{}$, 
$\Ext_{(\Nat \rightarrow \Nat) \rightarrow \Nat}$ 
is not included in 
$\Eqpm_{(\Nat \rightarrow \Nat) \rightarrow \Nat}$. Therefore, one cannot 
prove in~\LHAw{}
    $$\forall x \forall y \ 
    x \Ext_{(\Nat \rightarrow \Nat) \rightarrow \Nat} y
    \Rightarrow
    x \Eqpm_{(\Nat \rightarrow \Nat) \rightarrow \Nat} y$$
Going one step higher in the hierarchy of sorts, one can show that
$$\forall x \forall y \     x 
\Eqpm_{((\Nat \rightarrow \Nat) \rightarrow \Nat) \rightarrow \Nat } 
    y 
    \Rightarrow
    x 
\Ext_{((\Nat \rightarrow \Nat) \rightarrow \Nat) \rightarrow \Nat} 
    y$$
is not provable in \LHAw{}. Indeed, consider a variant \HROo{} of 
\HRO{} where natural numbers denote recursive functions 
that can access an oracle deciding if its entry is an index
of the identity function (i.e. for instance it 
returns $1$ if it is the case and $0$ otherwise). 
Let~$n \in 
    \HROo_{((\Nat \rightarrow \Nat) \rightarrow \Nat) \rightarrow
    \Nat}$
an index for this oracle and
    $m \in 
    \HROo_{((\Nat \rightarrow \Nat) \rightarrow \Nat) \rightarrow
    \Nat}$
an index of the constant function $x \mapsto 0$. It turns out
that
$$\begin{array}{rcl}
         \HROo & \vDash & n 
\Eqpm_{(\Nat \rightarrow \Nat) \rightarrow \Nat) \rightarrow \Nat} 
         m  \\
         \HROo & \vDash & 
n 
\neq^{\mathrm{ext}}_{(\Nat \rightarrow \Nat) \rightarrow \Nat) \rightarrow \Nat} 
         m
    \end{array}$$
because the indexes of the identity function are not in the domain of
$\Eqpm_{(\Nat \rightarrow \Nat) \rightarrow \Nat}$.

Finally, in the set-theoretic model \M{} of \HAw{} (and in fact
in any extensional model), one can
show
$$\begin{array}{rcl}
         \M & \vDash & \forall x \forall y \ x \Ext_{\sigma} y 
         \Leftrightarrow 
         x \Eqpm_{\sigma} y
    \end{array}$$
for all $\sigma$  (by induction on the sorts of System~\T{}). 

Therefore, in \LHAw{}, one cannot prove that the relations
$\Eqpm$ and $\Ext$ are different.
We wrap up all these results in the following
theorem.
\begin{thm}
In \LHAw{} one cannot prove that
\begin{enumerate}
    \item $\Eqpm_{(\Nat \rightarrow \Nat) \rightarrow \Nat}$ is reflexive;
    \item $\Ext_{(\Nat \rightarrow \Nat) \rightarrow \Nat} 
    \subseteq
    \Eqpm_{(\Nat \rightarrow \Nat) \rightarrow \Nat}$;    
    \item $
\Eqpm_{((\Nat \rightarrow \Nat) \rightarrow \Nat) \rightarrow \Nat} 
    \subseteq
\Ext_{((\Nat \rightarrow \Nat) \rightarrow \Nat) \rightarrow \Nat}$;
    \item $\Eqpm_{\sigma} \subsetneq \Ext_{\sigma}$
    and $\Ext_{\sigma} \subsetneq \Eqpm_{\sigma}$ (for any sort $\sigma$);
\end{enumerate}
where the symbols $\subseteq, \subsetneq$ and the property
of being reflexive are defined inside \LHAw{} in the usual way.
\end{thm}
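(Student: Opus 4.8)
The plan is to prove all four items uniformly by \emph{soundness}: by the Fact that \LHAw{} captures \HAw{}, a closed formula derivable in \LHAw{} becomes, after replacing each $\nullt(t)$ by $t=0$, a logical consequence of \HAw{}, and hence holds in every model of \HAw{}. Since none of the formulas in items~1--4 mentions $\nullt$, this replacement is trivial, so it suffices, for each item separately, to exhibit one model of \HAw{} in which the stated formula fails. I will draw these counter-models from the three structures already available, namely \HRO{}, its oracle variant \HROo{}, and the full set-theoretic model \M{}; all the semantic computations I need have in fact been recorded in the discussion preceding the theorem, and the task is to package them under soundness.

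For items~1 and~2 I would work inside \HRO{}. Using a quote functional $\Hquote \in \HRO_{(\Nat \rightarrow \Nat) \rightarrow \Nat}$ together with two distinct indices $p,q$ of one and the same unary function, we have $\HRO \vDash p \Eqpm_{\Nat \rightarrow \Nat} q$ while $\HRO \vDash \{\Hquote\}(p) \neq^{\mathrm{pm}}_{\Nat} \{\Hquote\}(q)$, whence $\HRO \vDash \Hquote \neq^{\mathrm{pm}}_{(\Nat \rightarrow \Nat) \rightarrow \Nat} \Hquote$. This directly refutes the reflexivity of $\Eqpm_{(\Nat \rightarrow \Nat) \rightarrow \Nat}$ in \HRO{}, giving item~1. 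For item~2 I would additionally use that $\Ext_{\sigma}$ is an equivalence relation at every sort (the external induction mentioned above), hence reflexive, so $\HRO \vDash \Hquote \Ext_{(\Nat \rightarrow \Nat) \rightarrow \Nat} \Hquote$; combined with $\HRO \vDash \Hquote \neq^{\mathrm{pm}} \Hquote$, the element $\Hquote$ witnesses in \HRO{} an inhabitant of $\Ext$ lying outside $\Eqpm$, refuting $\Ext_{(\Nat \rightarrow \Nat) \rightarrow \Nat} \subseteq \Eqpm_{(\Nat \rightarrow \Nat) \rightarrow \Nat}$.

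For item~3 I would pass to \HROo{} and reuse the indices $n$ (for the identity-testing oracle) and $m$ (for the constant map $x \mapsto 0$), for which $\HROo \vDash n \Eqpm m$ but $\HROo \vDash n \neq^{\mathrm{ext}} m$ at sort $((\Nat \rightarrow \Nat) \rightarrow \Nat) \rightarrow \Nat$; the pair $(n,m)$ then witnesses the failure of $\Eqpm \subseteq \Ext$ at that sort. For item~4 I would instead use the extensional model \M{}: the plan is to prove, by external induction on $\sigma$, that $\M \vDash \forall x \forall y\ (x \Ext_{\sigma} y \Leftrightarrow x \Eqpm_{\sigma} y)$, so that the two relations coincide in \M{}. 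Consequently neither proper inclusion $\Eqpm_{\sigma} \subsetneq \Ext_{\sigma}$ nor $\Ext_{\sigma} \subsetneq \Eqpm_{\sigma}$ can hold in \M{} (a proper inclusion requires a witness of the larger relation outside the smaller), and by soundness neither is derivable in \LHAw{}.

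The logical skeleton---soundness together with a falsifying model---is routine, and I expect the main obstacle to be the function-sort step of the induction for item~4. There one must establish $f \Ext_{\sigma \rightarrow \tau} g \Leftrightarrow f \Eqpm_{\sigma \rightarrow \tau} g$ in \M{} from the two induction hypotheses. The delicate point is that, unlike in \HRO{}, the relation $\Eqpm_{\sigma}$ \emph{is} reflexive in \M{} (being equal, by the induction hypothesis, to the reflexive relation $\Ext_{\sigma}$), and that extensional equality at each sort coincides with genuine equality of elements of $\M_{\sigma}$; these two facts are exactly what make the parametric definition collapse onto the extensional one, in the forward direction by instantiating reflexivity of $\Eqpm_{\sigma}$ and in the backward direction by substituting equal arguments. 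The remaining obligations---the existence and computational behaviour of $\Hquote$ in \HRO{} and of the oracle functional in \HROo{}, together with the accompanying (in)equalities---are the standard features of these models already isolated above, so assembling them under soundness completes the argument.
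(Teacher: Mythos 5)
Your proposal is correct and follows essentially the same route as the paper: the paper's proof of this theorem is precisely the preceding discussion, which refutes item~1 and item~2 in \HRO{} via the quote functional and the indices $p,q$, item~3 in \HROo{} via the indices $n,m$, and item~4 in the set-theoretic model \M{} where $\Ext_{\sigma}$ and $\Eqpm_{\sigma}$ coincide by induction on sorts. Your only addition is making the soundness packaging (via the Fact that \LHAw{} captures \HAw{}, noting $\nullt$ does not occur) and the induction step for \M{} explicit, which the paper leaves implicit.
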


\subsection{A first translation: from System~\T{} 
into \LHAw{}}
\label{subsec:FT}
%\addcontentsline{toc}{subsection}{\nameref{subsec:FT}}

Although one cannot prove inside \LHAw{} that \Eqpm{} is reflexive,
it can be shown that all closed terms of System~\T{} are in its
domain. With this goal in mind, we design a translation from 
System~\T{} into \LHAw{}:
$$\begin{array}{rcl}
             (\Delta \vdash_{\T} t: \sigma)^{\Pm} & \leadsto &  
    \Delta^1 , \Delta^2 ; 
    \Delta^{\Pm} \vdash t^{\Pm}: t^1 \Eqpm_{\sigma} t^2
        \end{array}$$
\begin{enumerate}
    \item Declarations of variables in signatures are duplicated.
    Fixing $i=1,2$:
$$\begin{array}{rcl}
             \emptyset^{i}& \equiv & \emptyset  \\
             (\Delta, x^{\sigma})^{i} & \equiv & 
             \Delta^{i}, (x^i)^{\sigma}
        \end{array}$$
where $x^i$ are fresh distinct variables.
    \item Terms of System~\T{} are duplicated. Fixing $i= 1,2$:
    $$\begin{array}{rcl}
             t^{i}& \equiv & t [\theta_t^i]
        \end{array}$$
    will denote the term obtained by substituting all free variables
    $x$ of $t$ by $x^i$ (i.e. $\theta_t^i$ is the 
    substitution defined on the free variables of
    $t$ that associates to a variable $x$
    the variable $x^i$).
    \item Signatures of System~\T{} are translated into contexts
    of \LHAw{}:
    $$\begin{array}{rcl}
             \emptyset^{\Pm}& \equiv & \emptyset  \\
             (\Delta, x^{\sigma})^{\Pm} & \equiv & 
             \Delta^{\Pm}, x^{\Pm}: x^1 \Eqpm_{\sigma} x^2
        \end{array}$$
    \item Terms of System~\T{} are translated into proof terms
    of \LHAw{}:
$$\begin{array}{rcl}
             (x)^{\Pm}& \equiv & x^{\Pm}  \\
              (\lambda x.t)^{\Pm} & \equiv &
              \lambda x^1 \lambda x^2. \lambda x^{\Pm}. t^{\Pm} \\
              (t \, u)^{\Pm} & \equiv & 
              t^{\Pm} \, u^1 \, u^2 \, u^{\Pm} \\
              0^{\Pm} & \equiv & \refle \, 0 \\
              (\succe \, t)^{\Pm} & \equiv & 
         \peel \bigl(t^{\Pm}, \hat{x}.(\succe \, t^1 = \succe \, x),
              \refle \, (\succe \, t^1)\bigr) \\
              (\Rec \, t \, u \, v)^{\Pm} & \equiv & 
              \Ind \bigl( 
              \hat{x}.( \forall y^{\Nat} \ x = y \Rightarrow
              \Rec^{\sigma} \, t^1 \, u^1 \, x
                \Eqpm_{\sigma}
                        \Rec^{\sigma} \, t^2 \, u^2 \, y), \\
                & & \hspace{1cm}
                \lambda y. \lambda \xi. \peel ( \xi, 
                \hat{z}.( t^1 
                \Eqpm_{\sigma} 
                \Rec^{\sigma} \, t^2 \, u^2 \, z), t^{\Pm}), \\
                & & \hspace{1cm}
                \lambda x . \lambda \eta.  \lambda y.
                \lambda \xi .
                \peel ( \xi , 
                \hat{z}. (u^1 (\Rec \, t^1 \, u^1 \, x) x 
                \Eqpm_{\sigma}
                (\Rec \, t^2 \, u^2 \, z)), \\
              & & \hspace{3.8cm}
                 u^{\Pm} (\Rec \, t^1 \, u^1 \, x)
                (\Rec \, t^2 \, u^2 \, x) 
                (\eta \, x \, (\refle \, x)) \, x \, x \, 
                (\refle \, x)),
                \\
                & & \hspace{1cm}
                v^1) v^2 \, v^{\Pm}
        \end{array}$$
\end{enumerate}
The translation works as follows:
\begin{enumerate}
    \item An abstraction in System~\T{} is interpreted as
    3 abstractions in \LHAw{}: 2 abstractions of first-order variables
    and one of proof variable. Indeed, 
    $(\lambda x^{\sigma}.t)^{\Pm}$ is of type
    ${\forall x^1 \forall x^2 \, x^1 \Eqpm_{\sigma} x^2 \Rightarrow
    t^1 \Eqpm t^2}$ (assuming $\lambda x^{\sigma}.t: \sigma \rightarrow 
    \tau)$.
    \item Symmetrically, an application in System~\T{} is interpreted
    as 3 applications.
    \item Because the relation $\Eqpm_{\Nat}$ is merely the equality,
    $0$ and $\succe \, t$ are interpreted as equality proofs.
    \item Finally, the recursor is interpreted using an induction.
    During the induction, the synchronization between the two
    copies of the term $v$ (of sort $\Nat$) is lost. Therefore,
    we need an extra generalization in the hypothesis to retrieve
    that these terms are equal.
\end{enumerate}

\begin{thm} If
        $\Delta   \vdash_{\T}  t: \sigma$
then
        $\Delta^1,\Delta^2 ; \Delta^{\Pm}   \vdash  t^{\Pm}: 
        t^1 \Eqpm_{\sigma} t^2$.
In particular, $\vdash t^{\Pm}: 
        t \Eqpm_{\sigma} $t
for all closed terms of sort $\sigma$.
\end{thm}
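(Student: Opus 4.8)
The plan is to proceed by induction on the derivation of $\Delta \vdash_{\T} t: \sigma$ in System~\T{} (Figure~\ref{fig1}), checking for each of the six typing rules that the proof term $t^{\Pm}$ produced by the translation really has type $t^1 \Eqpm_{\sigma} t^2$ in the context $\Delta^1, \Delta^2 ; \Delta^{\Pm}$. Two ingredients do the real work throughout: the typing rules of \LHAw{} (Figure~\ref{fig2}), applied exactly according to the syntactic shape of $t^{\Pm}$, and the conversion rule $\Phi \simeq \Psi$, which absorbs both the $\beta$-reductions and the recursor reductions of System~\T{}; the latter is what silently rewrites $\Rec^{\sigma} t^i u^i\, 0$ into $t^i$ and $\Rec^{\sigma} t^i u^i (\succe\, x)$ into $u^i (\Rec^{\sigma} t^i u^i x)\, x$. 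It is convenient to keep in mind the unfolded shape of the relation at an arrow sort: $a \Eqpm_{\sigma \rightarrow \tau} b$ stands for $\forall x \forall y\ x \Eqpm_{\sigma} y \Rightarrow a\,x \Eqpm_{\tau} b\,y$.

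First I would dispatch the easy cases. For a variable, $x^{\Pm}$ is declared with the right type in $\Delta^{\Pm}$, so the axiom rule applies. For an abstraction, the induction hypothesis gives $t^{\Pm}: t^1 \Eqpm_{\tau} t^2$ in the context extended by $(x^1)^{\sigma}, (x^2)^{\sigma}$ and $x^{\Pm}: x^1 \Eqpm_{\sigma} x^2$; applying the two first-order $\forall$-introductions and one $\Rightarrow$-introduction to $\lambda x^1 \lambda x^2. \lambda x^{\Pm}. t^{\Pm}$ yields exactly the unfolding of $(\lambda x.t)^1 \Eqpm_{\sigma \rightarrow \tau} (\lambda x.t)^2$, the freshness side conditions being met because $x^1, x^2$ are chosen fresh. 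Application is dual: the three eliminations in $t^{\Pm}\, u^1\, u^2\, u^{\Pm}$ consume the two $\forall$ and the one $\Rightarrow$ of $t^1 \Eqpm_{\sigma \rightarrow \tau} t^2$, feeding it $u^{\Pm}: u^1 \Eqpm_{\sigma} u^2$ from the second induction hypothesis, and land on $t^1 u^1 \Eqpm_{\tau} t^2 u^2$. For $0$, since $\Eqpm_{\Nat}$ is literal equality, $\refle\,0: 0 = 0$ is immediate; for $\succe\,t$, the hypothesis gives $t^1 = t^2$ and a single $\peel$ transports $\refle\,(\succe\,t^1)$ along it to $\succe\,t^1 = \succe\,t^2$.

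The main obstacle is the recursor case, which is why the translation carries the enriched invariant $\Phi(x) \equiv \forall y^{\Nat}\ x = y \Rightarrow \Rec^{\sigma} t^1 u^1 x \Eqpm_{\sigma} \Rec^{\sigma} t^2 u^2 y$ rather than the naive $\Rec^{\sigma} t^1 u^1 x \Eqpm_{\sigma} \Rec^{\sigma} t^2 u^2 x$. The subtlety is that the induction must be run on a single copy of the index (here $v^1$), producing $\Phi[x := v^1]$; the second copy $v^2$ can only be reconnected \emph{afterwards}, by instantiating the quantified $y$ with $v^2$ and discharging the hypothesis $v^1 = v^2$ with the proof term $v^{\Pm}$ supplied by the induction hypothesis on $v$. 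Concretely, I would verify the two premises of the $\Ind$ rule. The base term reduces, by conversion on $\Rec^{\sigma} t^i u^i\, 0 \cong t^i$, to a $\peel$ transporting $t^{\Pm}: t^1 \Eqpm_{\sigma} t^2$ along the hypothesis $0 = y$. The step term, after introducing $x$, the local hypothesis $\eta: \Phi(x)$, then $y$ and $\xi: \succe\,x = y$, reduces by conversion on $\Rec^{\sigma} t^i u^i (\succe\,x) \cong u^i (\Rec^{\sigma} t^i u^i x)\, x$ to a $\peel$ along $\xi$ whose payload is $u^{\Pm}$ applied to $\Rec^{\sigma} t^1 u^1 x$, $\Rec^{\sigma} t^2 u^2 x$, the relatedness proof $\eta\,x\,(\refle\,x): \Rec^{\sigma} t^1 u^1 x \Eqpm_{\sigma} \Rec^{\sigma} t^2 u^2 x$, and then $x, x, \refle\,x$ for the numeric argument.

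Feeding the resulting $\Ind(\ldots, v^1)$ its remaining arguments $v^2$ and $v^{\Pm}$ then collapses $\Phi[x := v^1]$ to the desired $\Rec^{\sigma} t^1 u^1 v^1 \Eqpm_{\sigma} \Rec^{\sigma} t^2 u^2 v^2$, completing the induction. The ``in particular'' clause follows by taking $\Delta = \emptyset$, since then $\Delta^1 = \Delta^2 = \Delta^{\Pm} = \emptyset$ and $t^1 = t^2 = t$, so that $\vdash t^{\Pm}: t \Eqpm_{\sigma} t$.
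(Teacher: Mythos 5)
Your proof is correct and follows exactly the paper's approach: the paper proves this theorem by induction on the derivations of System~\T{} (stating only that), and your case analysis—including the key observations that the conversion rule $\Phi \simeq \Psi$ absorbs the $\Rec$-reductions and that the generalized invariant $\forall y^{\Nat}\, x = y \Rightarrow \cdots$ is needed to resynchronize the two copies of the index via $v^2$ and $v^{\Pm}$—is precisely the detail the paper leaves implicit.
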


\begin{proof}
    By induction on the derivations of System~\T{}.
\end{proof}

We deduce from the previous theorem that each
closed term of System~\T{} is in the domain of
\Eqpm{}. 

The following terms are used later:
$$\begin{array}{rcl}
        \Delta^1,\Delta^2 ; \Delta^{\Pm}
        & \vdash & 
        \Elim^i_{\hat{z}.t}: 
        \forall z^1 \forall z^2 \ z^1 \Eqpm_{\sigma} z^2
        \Rightarrow 
        t^i[z^i = z^1] \Eqpm_{\tau} 
        t^i[z^i = z^2]
    \end{array}$$
for $i=1,2$. Note that these proof terms are indexed by
a term $t$ and a variable $z$.
These terms are constructed using the previous translation,
as follows:
$$\begin{array}{rcl}
         \Elim^1_{\hat{z}.t} & \equiv &
         \lambda z^1 \lambda z^2 . \lambda z^{\Pm}.
        \transpm \, t^1 \, t^2 t^1[z^1:= z^2] t^{\Pm} \\
        & & \hspace*{2cm}
        (\sympm t^1[z^1:= z^2] t^2
        t^{\Pm}[z^1:= z^2][z^{\Pm}:= (\reflpm z^1 z^2 z^{\Pm}).2]
        )
        \\ 
         \Elim^2_{\hat{z}.t} & \equiv &
         \lambda z^1 \lambda z^2 . \lambda z^{\Pm}.
        \transpm \, t^2[z^2:= z^1] \, t^1 t^2 \\
        & & \hspace*{2cm}
        (\sympm t^1 t^2[z^2:= z^1] 
        t^{\Pm}[z^2:= z^1] 
        [z^{\Pm}:= (\reflpm \, z^1 z^2 z^{\Pm}).1] ) t^{\Pm})
        \\
    \end{array}$$
They are well typed as soon as $\FV(t) \subseteq \Delta,z$.

\section{Interpreting extensional equality: from \LEHAw{} into \LHAw{}}
\label{sec:ST}
%\addcontentsline{toc}{section}{\nameref{sec:ST}}

\subsection{A preliminary step: a translation from \LHAw{}
into \LHAw{}}
\label{subsec:Intparam}
%\addcontentsline{toc}{subsection}{\nameref{subsec:Intparam}}
Although all closed terms of System~\T{} are in the domain of \Eqpm{},
one cannot prove
$\forall x^{\sigma} \ x \Eqpm_{\sigma} x$
in $\LHAw$ (for $\sigma = (\Nat \rightarrow \Nat) \rightarrow \Nat$
for instance). Nevertheless, building on the intuition of
restricting quantifications and
on the work of the previous section, we can
design a translation 
$$\begin{array}{rcl}
             (\Delta ; \Gamma \vdash M: \Phi)^{\Pm} & \leadsto &  
    \Delta^1 , \Delta^2 ; 
    \Delta^{\Pm} , \Gamma^{\Pm} \vdash M^{\Pm}: \Phi^{\Pm}
        \end{array}$$
from $\LHAw$ into itself that will serve
as a basis to interpret extensional equality.
\begin{enumerate}
    \item This translation extends the one defined formerly.
In particular 
$\Delta^i, \Delta^{\Pm}, t^i \mbox{ and } t^{\Pm}$
are already defined.
    \item Formulas of
    $\LHAw$ are translated into formulas of $\LHAw$:
 $$\begin{array}{rcl}
             (t = u)^{\Pm} & \equiv & t^1 \Eqpm_{\Nat} u^2 \\  
             \bot^{\Pm} & \equiv & \bot \\
             (\Phi \Rightarrow \Psi)^{\Pm} & \equiv &
             \Phi^{\Pm} \Rightarrow \Psi^{\Pm} \\
             (\Phi \wedge \Psi)^{\Pm} & \equiv &
             \Phi^{\Pm} \wedge \Psi^{\Pm} \\
             (\forall x^{\sigma} \Phi)^{\Pm} & \equiv &
             \forall x^1 \forall x^2 
             \ x^1 \Eqpm_{\sigma} x^2 \Rightarrow \Phi^{\Pm} \\
             (\exists x^{\sigma} \Phi)^{\Pm} & \equiv &
             \exists x^1 \exists x^2 
             \ x^1 \Eqpm_{\sigma} x^2 \wedge \Phi^{\Pm}
        \end{array}$$
    \item Contexts of $\LHAw$ are translated into contexts of 
    $\LHAw$:
    $$\begin{array}{rcl}
             \emptyset^{\Pm}& \equiv & \emptyset  \\
             (\Gamma, \xi: \Phi)^{\Pm} & \equiv & 
             \Gamma^{\Pm}, \xi: \Phi^{\Pm}
        \end{array}$$
    \item Proof terms of $\LHAw$ are translated into proof terms of 
    $\LHAw$:
    $$\begin{array}{rcl}
             (\xi)^{\Pm}& \equiv & \xi  \\
             (\lambda \xi. M)^{\Pm} & \equiv & \lambda \xi. M^{\Pm} 
             \\
             (M \,N)^{\Pm} & \equiv & M^{\Pm} N^{\Pm} \\
             (M,N)^{\Pm} & \equiv & (M^{\Pm},N^{\Pm}) \\
             (M.i)^{\Pm} & \equiv & M^{\Pm}.i \\
             (\lambda x. M)^{\Pm} & \equiv & 
             \lambda x^1 \lambda x^2 \lambda x^{\Pm}. M^{\Pm} 
             \\
             (M \, t)^{\Pm} & \equiv & 
             M^{\Pm} t^1 t^2 t^{\Pm} \\
             ([t,M])^{\Pm} & \equiv & 
             [t^1,[t^2,(t^{\Pm},M^{\Pm})]] 
             \\
             (\letc \, [x,\xi]:= M \, \inc \, N)^{\Pm} & \equiv & 
             \letc \, [x^1,\eta]:= M^{\Pm} \, \inc \,
             \letc \, [x^2, \chi]:= \eta \, \inc \, 
             N^{\Pm}[x^{\Pm}=\chi.1][\xi^{\Pm}:= \chi.2]
             \\
             (\efq(M,\Phi))^{\Pm} & \equiv & 
             \efq(M^{\Pm},\Phi^{\Pm}) \\
             (\refle \, t)^{\Pm} & \equiv & t^{\Pm} \\
             \bigl(\peel^{t,u}(M,\hat{x}.\Phi,N)\bigr)^{\Pm} 
             & \equiv &
             \peel\bigl
             (M^{\Pm}_2 , \hat{x^2}. \Phi^{\Pm}[x^1:= u^1],
             \peel(M^{\Pm}_1 , \hat{x^1}.\Phi^{\Pm}[x^2:= t^2],
             N^{\Pm})\bigr) \\
             \bigl(\Ind(\hat{x}.\Phi,M,N,t)\bigr)^{\Pm} & \equiv &
             \Ind(\hat{x}. \forall y \ x = y \Rightarrow
             \Phi^{\Pm}[x^1:= x][x^2:=y], \\
             & & \hspace*{0.5cm}
             \lambda y \lambda \xi. 
             \peel (\xi, \hat{z}.\Phi^{\Pm}[x^1:=0][x2:=z] ,
             M^{\Pm}),\\
             & & \hspace*{0.5cm}
             \lambda x \lambda \eta \lambda y \xi.
             \peel 
             (\xi, \hat{z}.\Phi^{\Pm}[x^1:=\succe \, x][x^2:=z], 
             N^{\Pm} x \, x \,(\refle \, x) 
             (\eta \, x \, (\refle \, x)) , \\
             & & \hspace{0.5cm}
             t^1) 
             t^2 t^{\Pm}
        \end{array}$$
where in the translation of $\peel(M,\hat{x}.\Phi,N)$, 
    $M^{\Pm}_i$ denotes a proof of $t^i = u^i$: 
    $$\begin{array}{rclcl}
             M^{\Pm}_1 & \equiv & 
             \transpm_{\Nat} t^1 \, u^2 \, u^1 \, M^{\Pm} 
             (\sympm_{\Nat} u^1 \, u^2 u^{\Pm}) &: & t^1 = u^1
             \\ 
             M^{\Pm}_2 & \equiv & 
             \transpm_{\Nat} t^2 \, t^1 \, u^2  
             (\sympm_{\Nat} t^1 \, t^2 t^{\Pm}) 
             M^{\Pm} &: & t^2 = u^2
        \end{array}$$
\end{enumerate}
Here, the translation of $\peel$ is ad hoc: it is merely done
by using $\peel$ on two distinct equalities. However,
it will not be the case in the last translation, where we 
will need an 
external recursion on the formulas of
the source system to interpret it.

The translation of induction follows the same principle as the
translation of the recursor done in 
Section~\ref{subsec:FT}: because the
synchronization between the copies of~$t$~is lost, we need to 
generalize the inductive hypothesis.

\begin{thm}
If
$\Delta ; \Gamma  \vdash  
M: \Phi$ ~then~$\Delta^1 , \Delta^2 ; 
         \Delta^{\Pm}, \Gamma^{\Pm}  \vdash 
         M^{\Pm}: \Phi^{\Pm}$. 
\end{thm}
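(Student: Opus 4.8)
The plan is to prove the statement by induction on the derivation of the source judgment $\Delta ; \Gamma \vdash M : \Phi$ in \LHAw{}, showing that for each typing rule the translated proof term $M^{\Pm}$ has the translated type $\Phi^{\Pm}$ in the context $\Delta^1 , \Delta^2 ; \Delta^{\Pm}, \Gamma^{\Pm}$. Since the translation of formulas, contexts, and proof terms is defined compositionally by the clauses above, each inductive case reduces to checking that the translated proof term is derivable assuming the induction hypotheses on the subderivations. Throughout I would freely use the facts established earlier (weakening, first-order substitution, and the substitution lemma for proof variables), together with the first translation of Section~\ref{subsec:FT}, which guarantees that every well-typed term $t$ of System~\T{} yields a proof $t^{\Pm} : t^1 \Eqpm_{\sigma} t^2$; this is exactly what is needed to type the translated quantifier-instantiation, existential-introduction, and equality proof terms.

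First I would dispatch the purely logical rules, which are essentially transparent under the translation. The axiom rule, the implication and conjunction rules, and the $\efq$ rule all commute with $(-)^{\Pm}$ because $\Phi^{\Pm}$ distributes over $\Rightarrow$, $\wedge$, and preserves $\bot$; here one only needs that well-formedness and the side condition $\FV(\Phi^{\Pm}) \subseteq \Delta^1,\Delta^2$ are maintained. For the universal and existential rules, the translation turns a single first-order binder into two first-order binders plus a proof binder witnessing $x^1 \Eqpm_{\sigma} x^2$, matching the clause $(\forall x^{\sigma}\Phi)^{\Pm} \equiv \forall x^1 \forall x^2\, x^1 \Eqpm_{\sigma} x^2 \Rightarrow \Phi^{\Pm}$; instantiation $(M\,t)^{\Pm} \equiv M^{\Pm}\,t^1\,t^2\,t^{\Pm}$ then typechecks precisely because $t^{\Pm} : t^1 \Eqpm_{\sigma} t^2$ by Section~\ref{subsec:FT}, after observing that $\Phi^{\Pm}[x^1 := t^1][x^2 := t^2] = (\Phi[x:=t])^{\Pm}$, a substitution-commutation lemma I would state and verify by induction on $\Phi$.

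Next come the equality rules. For $\refle\,t$, the clause $(\refle\,t)^{\Pm} \equiv t^{\Pm}$ typechecks since $(t=t)^{\Pm} \equiv t^1 \Eqpm_{\Nat} t^2$ is exactly the type of $t^{\Pm}$. The $\peel$ case is where the bookkeeping begins: the auxiliary proofs $M^{\Pm}_1 : t^1 = u^1$ and $M^{\Pm}_2 : t^2 = u^2$ are built from $M^{\Pm} : t^1 \Eqpm_{\Nat} u^2$ using $\transpm_{\Nat}$, $\sympm_{\Nat}$, and the coherence proofs $t^{\Pm}, u^{\Pm}$; I would verify these typings using the symmetry and transitivity lemmas of Section~\ref{subsec:ExPER}, and then check that the nested $\peel$ transports the motive $\Phi^{\Pm}$ along $u^1$ and then $t^2$ to land at the desired instance. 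The main obstacle, however, is the induction rule. The difficulty is that the naive translation of $\Ind$ would require the two copies $t^1, t^2$ to march in lockstep through the induction, but the recursor-style elaboration only inducts on one copy; so the motive must be generalized to $\hat{x}.\,\forall y\; x = y \Rightarrow \Phi^{\Pm}[x^1 := x][x^2 := y]$ to reintroduce the lost synchronization, exactly as flagged after the definition and mirroring the treatment of $\Rec$. Here I would carefully check the base step (where $\peel$ transports $M^{\Pm}$ along $z$ from the motive at $0$) and the inductive step (where $N^{\Pm}$ is applied with the reflexivity witnesses $\refle\,x$ and the induction hypothesis $\eta\,x\,(\refle\,x)$, then $\peel$ transports along the equation $\xi$), and finally confirm that instantiating at $t^1, t^2, t^{\Pm}$ collapses the generalized motive back to $(\Phi[x:=t])^{\Pm}$. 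This last case is where almost all the genuine content lies, and where the substitution-commutation lemma must be applied most delicately to align the indices $x^1, x^2$ with the bound variables of the generalized motive.
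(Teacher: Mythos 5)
Your overall strategy---induction on the derivation of $\Delta;\Gamma\vdash M:\Phi$, a substitution-commutation lemma $\bigl(\Phi[x:=t]\bigr)^{\Pm}\equiv\Phi^{\Pm}[x^1:=t^1][x^2:=t^2]$ proved by induction on $\Phi$, and the first translation of Section~\ref{subsec:FT} supplying $t^{\Pm}:t^1\Eqpm_{\sigma}t^2$ where needed---is exactly the paper's proof, and your treatment of the quantifier rules, of $\refle$ and $\peel$, and of the generalized motive in the $\Ind$ case matches the paper's definitions. (The paper also isolates as a separate lemma the term-level fact $\bigl(t[x:=u]\bigr)^i\equiv t^i[x^i:=u^i]$, which your formula-level lemma needs in its atomic case; you leave it implicit, which is harmless.)

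There is, however, one genuine omission: your case analysis never mentions the conversion rule of \LHAw{}, the rule that passes from $\Delta;\Gamma\vdash M:\Phi$ to $\Delta;\Gamma\vdash M:\Psi$ whenever $\Phi\simeq\Psi$. Since $M^{\Pm}$ is computed from $M$ alone, handling this case requires showing that the translation preserves the congruence: if $\Phi\simeq\Psi$ then $\Phi^{\Pm}\simeq\Psi^{\Pm}$. This rests on the fact---the first of the three lemmas the paper invokes---that $t\cong u$ implies $t^i\cong u^i$ for $i=1,2$, a property of the variable-duplicating substitutions $\theta^i_t$ that must be checked against the generators of $\simeq$ (the reduction rules of System~\T{} and the $\nullt$ rules). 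This rule is not a side issue: it is what lets \LHAw{} identify, say, $(\lambda x.x)\,0$ with $0$ inside formulas, and a derivation may end with it, so without this case the induction is incomplete. The repair is routine once the need is noticed, but your proposal shows no awareness that the rule exists or that it calls for its own preservation lemma.
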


The proof of this theorem is by induction on the derivation of
$\LHAw$ and it uses three lemmas:

\begin{lemma} 
If~$t  \cong  u $ then $t ^i \cong  u^i$~for $i=1,2$ 
and $t,u$ terms of System~\T{}. 
\end{lemma}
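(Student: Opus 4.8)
The plan is to reduce the statement to the earlier fact that congruence in System~\T{} is stable under substitution, namely that $t \cong u$ implies $t[\theta] \cong u[\theta]$ for any substitution $\theta$. The only subtlety is that $t^i$ and $u^i$ are not literally instances of one and the same substitution applied to $t$ and to $u$: by definition $t^i \equiv t[\theta_t^i]$ and $u^i \equiv u[\theta_u^i]$, where the duplication substitutions $\theta_t^i$ and $\theta_u^i$ are defined on $\FV(t)$ and $\FV(u)$ respectively. Since these two substitutions have different domains, the substitution fact does not apply verbatim, and this is exactly the gap I need to bridge.

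To bridge it I would introduce a single common renaming. I set $V = \FV(t) \cup \FV(u)$, which is finite, and let $\rho^i$ be the substitution with domain $V$ sending each $x \in V$ to its copy $x^i$. Because substitution only affects the free occurrences of a variable, and because the variables in $V \setminus \FV(t)$ do not occur free in $t$, the substitution $\rho^i$ has the same effect on $t$ as $\theta_t^i$; hence $t[\rho^i] = t[\theta_t^i] = t^i$, and symmetrically $u[\rho^i] = u^i$. In this way both duplicated terms are obtained from the \emph{one} substitution $\rho^i$.

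The conclusion then follows at once: applying the substitution-stability fact with $\theta = \rho^i$ to the hypothesis $t \cong u$ yields $t[\rho^i] \cong u[\rho^i]$, that is $t^i \cong u^i$, for each $i = 1,2$.

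I do not expect a genuine obstacle here; the single point demanding care is the observation that the two term-indexed substitutions $\theta_t^i$ and $\theta_u^i$ can be merged into one renaming on $\FV(t) \cup \FV(u)$. This is what makes the substitution fact applicable, which otherwise seems to fail because $\cong$ need not preserve the set of free variables — a $\beta$- or recursor-reduction can discard free variables of the redex, so $\FV(t)$ and $\FV(u)$ may differ. An alternative route would be a direct induction on the generation of $\cong$, but that would essentially re-derive the substitution fact and is therefore unnecessary. As a routine matter one should also note that the copies $x^i$ are chosen fresh, so the renaming $\rho^i$ introduces no variable capture.
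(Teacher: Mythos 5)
Your proof is correct. The paper states this lemma without giving a proof (it is one of three auxiliary lemmas quoted in support of the theorem on the $(\_)^{\Pm}$ translation), and your reduction to the earlier fact that $\cong$ is stable under substitution is precisely the natural argument that the paper's earlier Fact about substitutions is set up to support; your one point of care --- merging $\theta_t^i$ and $\theta_u^i$ into a single renaming $\rho^i$ on $\FV(t)\cup\FV(u)$, which is needed exactly because a reduction step can discard free variables --- is handled correctly.
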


\begin{lemma}
If $t(x^{\sigma})$ and $u$ are terms 
with $u$ of sort $\sigma$, then~$\bigl(t[x:=u]\bigr)^i 
\equiv t^i[x^i:= u^i]$
for $i=1,2$.
\end{lemma}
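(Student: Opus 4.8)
The plan is to prove the identity by structural induction on the term $t$, fixing $i \in \{1,2\}$ throughout. The key observation is that both the duplication $(-)^i$ and the substitution $[x:=u]$ are themselves (capture-avoiding) substitutions — indeed $t^i \equiv t[\theta_t^i]$ is just the injective renaming $w \mapsto w^i$ restricted to $\FV(t)$ — so the statement is an instance of the standard fact that substitutions compose. I rely throughout on the convention that the indexed variables $w^i$ are globally fresh and pairwise distinct, so that $w \mapsto w^i$ is injective and no $w^i$ ever occurs as a bound variable; I also read $\equiv$ up to $\alpha$-conversion.

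The base cases are immediate. If $t \equiv x$ then both sides equal $u^i$, since $t^i \equiv x^i$ and $x^i[x^i := u^i] \equiv u^i$. If $t \equiv y$ with $y \not\equiv x$, both sides equal $y^i$, using $y^i \not\equiv x^i$ by injectivity of the indexing. The constant $0$ is closed, so both sides are $0$.

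The inductive cases whose head symbol is not a binder — application $t_1\,t_2$, successor $\succe\,t_0$, and the recursor $\Rec^\sigma\,t_1\,t_2\,t_3$ — are routine. In each of them, $[x:=u]$, $[x^i:=u^i]$, and $(-)^i$ all distribute over the constructor, and the restriction of $(-)^i$ to an immediate subterm coincides with that subterm's own duplication; hence the case closes by applying the inductive hypotheses to the immediate subterms.

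The one delicate case, and the main obstacle, is the abstraction $t \equiv \lambda z^\tau. t_0$, where the binder interacts with the renaming $(-)^i$. By $\alpha$-conversion I may assume $z \not\equiv x$, $z \notin \FV(u)$, and that $z$ and $z^i$ are distinct from every indexed variable occurring in the computation; under these assumptions $[x:=u]$ and the duplications commute past the binder and capture is avoided. The subtle point is that $(-)^i$, being the renaming $w \mapsto w^i$ applied capture-avoidingly, leaves the bound occurrence of $z$ untouched, whereas the inductive hypothesis for $t_0$ treats $z$ as free and renames it to $z^i$. The two are reconciled by an $\alpha$-renaming of the bound variable — legitimate precisely because $z^i$ is fresh — after which the inductive hypothesis for $t_0$ yields the required identity, completing the induction.
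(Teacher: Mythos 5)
Your proof is correct, and it follows exactly the standard structural-induction argument that the paper implicitly relies on: the paper states this lemma without proof, treating it as a routine commutation of the duplicating renaming $(-)^i$ with substitution. Your identification of the abstraction case as the only delicate point --- since $t^i \equiv t[\theta^i_t]$ renames only the \emph{free} variables of $t$, the bound variable is left untouched, and this must be reconciled with the inductive hypothesis (which renames it) by an $\alpha$-renaming to the fresh variable $z^i$ --- is precisely the care needed to make the induction go through, given that the indexed variables are fresh and the indexing is injective.
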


\begin{lemma}
If $\Phi(x^{\sigma})$ is a formula and $t$ is a 
term of sort $\sigma$, then
         $\bigl(\Phi[x:=t]\bigr)^{\Pm}  \equiv 
         \Phi^{\Pm}[x^1:= t^1][x^2:= t^2 ]$.
\end{lemma}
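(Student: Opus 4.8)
The plan is to prove this by structural induction on the formula $\Phi$, matching the recursive definition of the translation $(\cdot)^{\Pm}$ on formulas. Since the claim is a syntactic identity ($\equiv$), I work throughout up to the usual renaming of bound variables, and I exploit a structural invariant of the construction: by definition $t^1$ mentions only superscript-$1$ variables and $t^2$ only superscript-$2$ variables, so the two substitutions $[x^1:=t^1]$ and $[x^2:=t^2]$ act on disjoint namespaces and commute. This invariant is what makes the two substitutions on the right-hand side behave independently.

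First I would dispatch the atomic cases. For $\Phi \equiv (u=v)$ we have $\Phi^{\Pm} \equiv u^1 \Eqpm_{\Nat} v^2$, so that $(\Phi[x:=t])^{\Pm} \equiv (u[x:=t])^1 \Eqpm_{\Nat} (v[x:=t])^2$. The previous lemma (substitution on terms commutes with duplication) gives $(u[x:=t])^1 \equiv u^1[x^1:=t^1]$ and $(v[x:=t])^2 \equiv v^2[x^2:=t^2]$; since $u^1$ contains no superscript-$2$ variable and $v^2$ no superscript-$1$ variable, this is exactly $(u^1 \Eqpm_{\Nat} v^2)[x^1:=t^1][x^2:=t^2]$, as required. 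The case $\bot$ is trivial (both sides are $\bot$), and the atomic predicate $\nullt(u)$ is handled identically, again reducing to the previous lemma applied to the term $u$. The propositional connectives $\Phi \Rightarrow \Psi$ and $\Phi \wedge \Psi$ are then immediate: the translation commutes with each connective and substitution distributes over them, so the two induction hypotheses combine directly.

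The only cases requiring care are the quantifiers. For $\Phi \equiv \forall y^{\tau} \Psi$, I may assume by renaming that $y$ is distinct from $x$ and does not occur free in $t$; hence $y^1, y^2$ differ from $x^1, x^2$ and do not occur in $t^1, t^2$. Now $(\forall y^{\tau}\Psi)^{\Pm} \equiv \forall y^1 \forall y^2 \ y^1 \Eqpm_{\tau} y^2 \Rightarrow \Psi^{\Pm}$, and because the substitution commutes with the binder $\forall y$, I obtain $((\forall y^{\tau}\Psi)[x:=t])^{\Pm} \equiv \forall y^1 \forall y^2 \ y^1 \Eqpm_{\tau} y^2 \Rightarrow (\Psi[x:=t])^{\Pm}$. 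Applying the induction hypothesis to $\Psi$, and noting that the relational premise $y^1 \Eqpm_{\tau} y^2$ is left untouched by $[x^1:=t^1][x^2:=t^2]$ (since $y^i \neq x^i$), this equals $(\forall y^{\tau}\Psi)^{\Pm}[x^1:=t^1][x^2:=t^2]$. The existential case $\exists y^{\tau}\Psi$ is strictly analogous, the added conjunct $y^1 \Eqpm_{\tau} y^2$ playing the same inert role.

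The main obstacle, such as it is, is purely the bookkeeping in these quantifier cases: one must verify that pushing both substitutions through the translated binders $\forall y^1 \forall y^2 (\cdots)$ and $\exists y^1 \exists y^2 (\cdots)$ introduces no variable capture. This is guaranteed by the freshness of $y$ together with the disjointness of the superscript-$1$ and superscript-$2$ namespaces, which ensures that the inserted hypothesis $y^1 \Eqpm_{\tau} y^2$ is closed under both substitutions and that the substitutions commute. No genuinely difficult step arises; the lemma is a routine compatibility of the translation with substitution, resting entirely on the preceding term-level lemma and the freshness convention.
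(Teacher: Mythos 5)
Your proof is correct and follows exactly the route the paper intends: a structural induction on $\Phi$ whose atomic case rests on the preceding term-level substitution lemma, with the quantifier cases handled by freshness of the bound variable and the disjointness of the superscript-$1$ and superscript-$2$ namespaces (the paper states this lemma without spelling out the proof, precisely because it is this routine induction). Your explicit treatment of $\nullt(u)$ even covers an atomic case the paper's definition of $(\_)^{\Pm}$ leaves implicit, so nothing is missing.
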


\subsection{Extending equality through parametricity: 
a translation from
\LEHAw{} into \LHAw{}}
\label{subsec:ExtEq}
%\addcontentsline{toc}{subsection}{\nameref{subsec:ExtEq}}

    \begin{figure}[t]
        \centering
    \begin{displaymath}
	\begin{array}{cc}
	\prftree
	    {\prfassumption{(\Delta ; \Gamma) \ \wfp}}
		{\prfassumption{ \Delta \vdash_{\T} t: \sigma}}
		{\Delta ; \Gamma \vdash_{\extp} 
		\refle_{\sigma} \, t: t =_{\sigma} t } 
					&   
	\prftree
		{\prfassumption{ \Delta ; \Gamma \vdash_{\extp} 
		M: t =_{\sigma} u}}
		{\prfassumption{ \Delta ; \Gamma \vdash_{\extp} 
		N: \Phi[x^{\sigma}: = t]}}
		{\Delta ; \Gamma \vdash_{\extp}
		\peel^{t,u}_{\sigma}(M,\hat{x}.\Phi, N): 
		\Phi[x^{\sigma}:= u] } \\[0.2cm]
		
	\prftree
	    {\prfassumption{ \Delta ; \Gamma \vdash_{\extp} M:
	    \forall x^{\sigma} f \, x =_{\tau} g \, x  }}
	    {\Delta ; \Gamma \vdash_{\extp} 
	    \extpm_{\sigma , \tau}(M): 
	    f =_{\sigma \rightarrow \tau} g } & 
	    
	\prftree
	    {\prfassumption{ \Delta ; \Gamma \vdash_{\extp} 
	    M: f =_{\sigma \rightarrow \tau} g}}
	    {\prfassumption{ \Delta ; \Gamma \vdash_{\extp} 
	    N: t =_{\sigma} u }}
	    {\Delta ; \Gamma \vdash_{\extp} 
	    \apppm_{\sigma , \tau}(M,t,u,N): 
	    f \, t=_{\tau} g \, u }
	\end{array}
	\end{displaymath}
\caption{\label{fig3} Additional typing rules for extensional
equality}
	\end{figure}

Our next goal is to extend the previous translation to give
an interpretation of extensional equality inside~\LHAw{}.

Consider an extension \LEHAw{} of \LHAw{} obtained by extending
equality in an extensional way to all higher sorts,
i.e. by adding
\begin{enumerate}
    \item atomic formulas $t =_{\sigma} u$ for all sorts $\sigma$;
    \item proof terms $(\refle_{\sigma} \, t)$,
    $\peel^{t,u}_{\sigma}(M,\hat{x}.\Phi,N)$,
    $\extpm_{\sigma , \tau}(M)$ and
    $\apppm_{\sigma , \tau}(M,t,u,N)$
    for all sorts~$\sigma$~and~$\tau$;
    \item typing rules for the added proof terms presented in
    Figure~\ref{fig3} at page~\pageref{fig3}.
 \end{enumerate}

The symbol 
$\vdash_{\extp}$
will be used to denote sequents (and provability) in $\LEHAw$.
The translation $(\_)^{\Pm}$ can be extended to a translation
from $\LEHAw$ into $\LHAw$.
Indeed, one interprets
$$\begin{array}{rcl}
         (t =_{\sigma} u)^{\Pm} & \equiv & 
         t^1 \Eqpm_{\sigma} u^2
    \end{array}$$
It is then possible to extend the translation with
$$\begin{array}{rcl}
         (\refle_{\sigma} \,t)^{\Pm} & \equiv & t^{\Pm}
    \end{array}$$
and still preserving adequacy. The case of
    $\bigl(\peel^{t,u}_{\sigma}(M,\hat{x}.\Phi,N)\bigr)^{\Pm}$
is more involved and it is treated as follows.
We first construct
a family of terms
    $\Elim_{\hat{x}.\Phi}$
satisfying that if
$\FV(\Phi)  \subseteq  \Delta,x^{\Pm}$
then
$$\begin{array}{rcl}
         \Delta^1 , \Delta^2 ; \Delta^{\Pm} 
         & \vdash & \Elim_{\hat{x}.\Phi}: 
         \forall x^1 \forall x^2 \forall y^1 \forall y^2 \ 
          x^1 \Eqpm_{\sigma} y^1 \Rightarrow 
          x^2 \Eqpm_{\sigma} y^2 \Rightarrow 
          \Phi^{\Pm} \Rightarrow \Phi^{\Pm}[x^1:= y^1][x^2:=y^2]
    \end{array}$$
This is done by induction on the syntax of formulas:
$$\begin{array}{rclr}
         \Elim_{\hat{x}.t =_{\sigma} u} & \equiv &  
         \lambda x^1 \lambda x^2 \lambda y^1 \lambda 
         y^2 \lambda \xi^1 \lambda \xi^2 \lambda \xi. 
        \transpm \, t^1[x^1:= y^1] \, t^1 \, u^2[x^2:= y^2] \\
        && \hspace*{4.8cm}
        (\Elim^1_{\hat{x}.t} \ y^1 \, x^1 \, 
        (\sympm \, x^1 \, y^1 \, \xi^1)) \\
        && \hspace*{4.8cm}
        (\transpm \, t^1 \, u^2 \, u^2[x^2:= y^2] 
        \xi
        (\Elim^2_{\hat{x}.u} \ x^2 \, y^2 \, \xi^2)) \\
        \Elim_{\hat{x}.\bot} & \equiv &  
         \lambda x^1 \lambda x^2 \lambda y^1 \lambda 
         y^2 \lambda \xi^1 \lambda \xi^2 \lambda \xi.
        \xi \\
        \Elim_{\hat{x}.(\Phi \Rightarrow \Psi)} & \equiv &  
        \lambda x^1 \lambda x^2 \lambda y^1 \lambda 
         y^2 \lambda \xi^1 \lambda \xi^2 \lambda \xi. 
         \lambda \eta. \Elim_{\hat{x}.\Psi}^+ 
         (\xi \ (\Elim^- \eta )) \\
         \Elim_{\hat{x}.(\Phi \wedge \Psi)} & \equiv &  
        \lambda x^1 \lambda x^2 \lambda y^1 \lambda 
         y^2 \lambda \xi^1 \lambda \xi^2 \lambda \xi.
         (\Elim_{\hat{x}.\Phi}^+ \xi.1 , 
         \Elim_{\hat{x}.\Psi}^+ \xi.2) \\
         \Elim_{\hat{x}.(\forall z \Phi)} & \equiv &
         \lambda x^1 \lambda x^2 \lambda y^1 \lambda 
         y^2 \lambda \xi^1 \lambda \xi^2 \lambda \xi.
         \lambda z^1 \lambda z^2 
         \lambda z^{\Pm}. \Elim_{\hat{x}.\Phi} (\xi \, z^1 z^2 z^{\Pm})\\
         \Elim_{\hat{x}.(\exists z \Phi)} & \equiv &
         \lambda x^1 \lambda x^2 \lambda y^1 \lambda 
         y^2 \lambda \xi^1 \lambda \xi^2 \lambda \xi.
        \letc \, [z^1, \eta ]:= \xi \, \inc \, 
         \letc \, [z^2,\chi]:= \eta \, \inc \, 
         [z^1,[z^2, (\eta.1, \Elim_{\hat{x}.\Phi}^+ \eta.2) ]]
    \end{array}$$
where 
$$\begin{array}{rcl}
         \Elim_{\hat{x}.\Phi}^+ & \equiv & 
         \Elim_{\hat{x}.\Phi} \, x^1 \, x^2 \,
         y^1\,y^2\,\xi^1\,\xi^2\\
         \Elim^- & \equiv & 
         \Elim_{\hat{y}.\Phi[x:=y]} \, y^1 \, y^2 \,
         x^1\,x^2\, (\sympm \, x^1 \, y^1 \, \xi^1)
         (\sympm \, x^2 \, y^2 \, \xi^2)
    \end{array}$$
The proof that $\Elim_{\hat{x}.\Phi}$ satisfies 
the given property is by induction on the syntax, 
where the hypothesis is used to treat the case of equality.

We can now define
$$\begin{array}{rcl}
    \bigl(\peel^{t,u}_{\sigma}(M,\hat{x}.\Phi,N)\bigr)^{\Pm}
        & \equiv & 
        \Elim_{\hat{x}.\Phi} t^1 t^2 u^1 u^2
        (\transpm t^1 u^2 u^1 M^{\Pm}(\sympm u^1 u^2 u^{\Pm})) \\
        && \hspace*{2.5cm}
        (\transpm t^2 t^1 u^2 (\sympm t^1 t^2 t^{\Pm}) M^{\Pm}) 
        N^{\Pm}
    \end{array}$$
Finally, we set
$$\begin{array}{rcl}
    \bigl(\extpm_{\sigma \rightarrow \tau} (M)\bigr)^{\Pm}
        & \equiv & 
        \lambda x^1 , x^2 \lambda x^{\Pm}. 
        M^{\Pm} \, x^1 \, x^2 \, x^{\Pm} \\
    \bigl(\apppm_{\sigma \rightarrow \tau}(M,t,u,N)\bigr)^{\Pm}
    & \equiv &
        M^{\Pm} \, t^1 \, u^2 \, N^{\Pm}
    \end{array}$$
\begin{thm}
If
         $\Delta ; \Gamma  \vdash_{\extp}  M: \Phi 
$
then
         $\Delta^1 , \Delta^2 ; 
         \Delta^{\Pm}, \Gamma^{\Pm} \vdash  M^{\Pm}: 
         \Phi^{\Pm}$.
\end{thm}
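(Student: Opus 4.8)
The plan is to argue by induction on the derivation of $\Delta ; \Gamma \vdash_{\extp} M : \Phi$ in \LEHAw{}. Such a derivation uses two kinds of rules: the rules already present in \LHAw{}, and the four new rules of Figure~\ref{fig3}. For the first kind I would observe that the argument of the preceding adequacy theorem (for the \LHAw{}-to-\LHAw{} translation) goes through unchanged: that proof never inspects a formula beyond its principal connective, and the translation is defined by structural recursion, so the presence of the new atoms $t =_{\sigma} u$ (translated as $t^1 \Eqpm_{\sigma} u^2$) is immaterial. The three accompanying lemmas extend routinely to the new atoms — in particular $\bigl((t =_{\sigma} u)[x := s]\bigr)^{\Pm} \equiv (t =_{\sigma} u)^{\Pm}[x^1 := s^1][x^2 := s^2]$ reduces to the corresponding lemma on System~\T{} terms — and the conversion rule stays sound because $t \cong t'$ entails $t^i \cong t'^i$, so $t^1 \Eqpm_{\sigma} u^2$ and $t'^1 \Eqpm_{\sigma} u'^2$ remain congruent. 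Hence only the four new cases require genuine work.

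The three easiest new cases are $\refle_{\sigma}$ and the two extensionality rules. For $\refle_{\sigma} \, t$ the premise gives $\Delta \vdash_{\T} t : \sigma$, so the parametricity theorem for System~\T{} yields $\Delta^1 , \Delta^2 ; \Delta^{\Pm} \vdash t^{\Pm} : t^1 \Eqpm_{\sigma} t^2$, and weakening by $\Gamma^{\Pm}$ gives exactly $(t =_{\sigma} t)^{\Pm}$. For $\extpm_{\sigma,\tau}(M)$, the induction hypothesis gives $M^{\Pm}$ of type $\forall x^1 \forall x^2 \, x^1 \Eqpm_{\sigma} x^2 \Rightarrow f^1 x^1 \Eqpm_{\tau} g^2 x^2$ (reading $(f\,x)^i$ as $f^i x^i$ via the substitution lemma), and the three abstractions $\lambda x^1 \lambda x^2 \lambda x^{\Pm}$ rebuild precisely $f^1 \Eqpm_{\sigma \rightarrow \tau} g^2$, i.e. $(f =_{\sigma \rightarrow \tau} g)^{\Pm}$. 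Dually, for $\apppm_{\sigma,\tau}(M,t,u,N)$ the hypotheses give $M^{\Pm}$ of arrow type and $N^{\Pm} : t^1 \Eqpm_{\sigma} u^2$, so instantiating $M^{\Pm}$ at $t^1, u^2$ and applying it to $N^{\Pm}$ produces $f^1 t^1 \Eqpm_{\tau} g^2 u^2$, which is $(f\,t =_{\tau} g\,u)^{\Pm}$.

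The main obstacle is the case of $\peel^{t,u}_{\sigma}$, which is precisely why the family $\Elim_{\hat{x}.\Phi}$ was introduced. First I would prove, by induction on the syntax of $\Phi$, that $\Elim_{\hat{x}.\Phi}$ has the announced type, the only delicate clause being the equality atom, where the auxiliary terms $\Elim^i_{\hat{z}.t}$ from the first translation are glued together with $\transpm$ and $\sympm$ (this is where the hypothesis $\FV(\Phi) \subseteq \Delta, x^{\Pm}$ is consumed). Granting this, the case itself is a direct computation: from the induction hypotheses $M^{\Pm} : t^1 \Eqpm_{\sigma} u^2$ and $N^{\Pm} : \Phi^{\Pm}[x^1 := t^1][x^2 := t^2]$ I would assemble the two ``glued'' equalities $t^1 \Eqpm_{\sigma} u^1$ and $t^2 \Eqpm_{\sigma} u^2$ displayed in the definition (using $t^{\Pm}$ and $u^{\Pm}$ supplied by the parametricity theorem for System~\T{}), instantiate $\Elim_{\hat{x}.\Phi}$ at $t^1, t^2, u^1, u^2$, and feed it these two equalities together with $N^{\Pm}$; by the substitution lemma for formulas the resulting type is $\Phi^{\Pm}[x^1 := u^1][x^2 := u^2] \equiv (\Phi[x^{\sigma} := u])^{\Pm}$, as required. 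I expect the only real friction to be the bookkeeping of which copy ($1$ or $2$) each transitivity and symmetry step acts on; all the logical content is carried by the correctness of $\Elim_{\hat{x}.\Phi}$.
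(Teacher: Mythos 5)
Your proposal is correct and takes essentially the same route as the paper: induction on the derivation, with the rules inherited from \LHAw{} handled exactly as in the preceding translation, $\refle_{\sigma}$ discharged by the parametricity theorem for System~\T{}, $\extpm_{\sigma,\tau}$ and $\apppm_{\sigma,\tau}$ by direct construction, and $\peel^{t,u}_{\sigma}$ via the typing of $\Elim_{\hat{x}.\Phi}$ established by induction on the syntax of $\Phi$. The only ingredient the paper names explicitly that you leave implicit is the generalization of Fact~\ref{fact3} to \LEHAw{} (if $\Delta ; \Gamma \vdash_{\extp} M : \Phi$ then $\FV(\Phi) \subseteq \Delta$), which, applied to the premise typing $N$, is what discharges the side condition $\FV(\Phi) \subseteq \Delta, x$ needed to invoke $\Elim_{\hat{x}.\Phi}$ in the $\peel$ case.
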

\begin{proof}
The case of 
    $\bigl(\peel_{\sigma}(M,\hat{x}.\Phi,N)\bigr)^{\Pm}$
uses the property of
    $\Elim_{\hat{x}.\Phi}$
and a generalization of Fact \ref{fact3}, saying that if
$    \Delta ; \Gamma  \vdash_{\extp}  M: \Phi $
then
    $ \FV(\Phi)  \subseteq  \Delta.$
\end{proof}

\begin{cor}
If \LHAw{} is consistent, then so is \LEHAw{}.
\end{cor}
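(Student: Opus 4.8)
The plan is to argue by contraposition, reducing the consistency of \LEHAw{} to the consistency of \LHAw{} via the translation theorem just established. Recall that a system is consistent precisely when there is no proof term $M$ such that $\bot$ is derivable from the empty signature and the empty context. So I would suppose, toward a contradiction, that \LEHAw{} is inconsistent: there exists a proof term $M$ with $\emptyset ; \emptyset \vdash_{\extp} M : \bot$.

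Next I would feed this derivation into the preceding theorem, which sends any derivation $\Delta ; \Gamma \vdash_{\extp} M : \Phi$ to a derivation $\Delta^1 , \Delta^2 ; \Delta^{\Pm}, \Gamma^{\Pm} \vdash M^{\Pm} : \Phi^{\Pm}$ in \LHAw{}. Instantiated at $\Delta = \Gamma = \emptyset$ and $\Phi = \bot$, it yields $\emptyset^1 , \emptyset^2 ; \emptyset^{\Pm}, \emptyset^{\Pm} \vdash M^{\Pm} : \bot^{\Pm}$ inside \LHAw{}. It then remains only to unfold the translation on this degenerate data.

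The key observation is that the translation fixes all of the relevant empty and bottom objects. By the base clauses of the definitions one has $\emptyset^i \equiv \emptyset$ for $i = 1, 2$, and $\emptyset^{\Pm} \equiv \emptyset$ for both the signature-to-context clause and the context-to-context clause; moreover the clause defining the translation of formulas gives $\bot^{\Pm} \equiv \bot$ outright. Substituting these identities collapses the translated sequent to $\emptyset ; \emptyset \vdash M^{\Pm} : \bot$, which exhibits a \LHAw{}-proof of $\bot$ and so contradicts the assumed consistency of \LHAw{}. Taking the contrapositive establishes the statement.

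I do not expect any genuine obstacle here: all the substantive work has already been carried out in the translation theorem, and the only thing left to check is that $(\_)^{\Pm}$ maps $\bot$ to $\bot$ and the empty signature and context to themselves, both of which are immediate from the defining clauses of the translation.
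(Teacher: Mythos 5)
Your proof is correct and follows exactly the paper's own argument: assume a closed proof $M$ of $\bot$ in \LEHAw{}, apply the translation theorem, and observe that $\bot^{\Pm}\equiv\bot$ and the empty signature/context translate to themselves, yielding $\vdash M^{\Pm}:\bot$ in \LHAw{}. The only difference is that you spell out the unfolding of the degenerate cases, which the paper leaves implicit.
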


\begin{proof}
    If \LEHAw{} is inconsistent,
    there exists a proof term $M$ 
    such that $\vdash_e M:\bot$.
    By the previous translation, one gets
    a derivation of $\vdash M^{\Pm}: \bot$
    and concludes that \LHAw{} is inconsistent.
\end{proof}

\subsection{Characterizing the image of the translation}
\label{subsec:Char}
%\addcontentsline{toc}{subsection}{\nameref{subsec:Char}}
In the previous section, we showed that if a closed formula
$\Phi$ is provable in \LEHAw{} then $\Phi^{\Pm}$ is provable
in \LHAw{}. The goal of this section is to prove the converse:
if $\Phi^{\Pm}$ is provable in \LHAw{} then $\Phi$ is provable in
\LEHAw{}. These properties show
that the system \LEHAw{} fully characterizes the
image of the translation we designed. 

We first show that the relation $\Eqpm$ collapses to
the equality relation in \LEHAw{}.
For every sort $\sigma$, we construct a proof term 
$$\vdash_e \Collaps_{\sigma}: 
    \forall x^{\sigma} \forall y^{\sigma} \ 
    x =_{\sigma} y  \Leftrightarrow x \Eqpm_{\sigma} y$$
by external induction on the sorts of System~\T{}:
$$\begin{array}{rcl}
         \Collaps_{\Nat}& \equiv &
         \lambda x \lambda y \bigl(\lambda \xi.\xi, \lambda
         \xi.\xi)\\
         \Collaps_{\sigma \rightarrow \tau} & \equiv &
         \lambda f \lambda g
         (\lambda \xi \lambda x \lambda y \lambda \eta.
         \Collaps_{\tau}.1 \, (f \, x) \, (g \,y) \, 
         \apppm_{\sigma,\tau}
         ( \xi , x,y, \Collaps_{\sigma}.2 \, 
         x \, y \, \eta), \\ & & \hspace*{2.6cm} 
         \lambda \xi. \extpm_{\sigma,\tau} ( \lambda z. 
         \Collaps_{\tau}.2 \, (f \, z) \, (g \, z) 
         (\xi \, z \, z \,
         (\Collaps_{\sigma}.1 \, z \, z _, 
         (\refle \,z))))\bigr)
    \end{array}$$
\begin{prop}
\label{Prop:colaps}
    For every sort $\sigma$, the binary relations
    $\Eqpm_{\sigma}$ and $\Ext_{\sigma}$
    collapse to $=_{\sigma}$ in $\LEHAw{}$.
\end{prop}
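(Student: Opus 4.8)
The plan is to treat the two relations by two parallel external inductions on the sort $\sigma$, each producing a closed proof term that witnesses a logical equivalence with $=_{\sigma}$. For $\Eqpm$ the witness has already been exhibited: it is the term $\Collaps_{\sigma}$ defined just above, and what remains is to verify that it indeed has the announced type
$$\vdash_{\extp} \Collaps_{\sigma}: \forall x^{\sigma} \forall y^{\sigma} \ x =_{\sigma} y \Leftrightarrow x \Eqpm_{\sigma} y$$
by induction on $\sigma$, reading $\Phi \Leftrightarrow \Psi$ as $(\Phi \Rightarrow \Psi) \wedge (\Psi \Rightarrow \Phi)$ so that $\Collaps_{\sigma}.1$ is the implication $=_{\sigma} \Rightarrow \Eqpm_{\sigma}$ and $\Collaps_{\sigma}.2$ its converse.

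At the base sort both $=_{\Nat}$ and $\Eqpm_{\Nat}$ unfold to the same atomic formula $x = y$, so the identity proof $\lambda \xi.\xi$ types both directions and $\Collaps_{\Nat}$ is immediate. In the inductive step at $\sigma \rightarrow \tau$ I would verify the two components separately. For the forward component, abstracting $\xi: f =_{\sigma \rightarrow \tau} g$ together with $x,y,\eta$ where $\eta: x \Eqpm_{\sigma} y$, the subterm $\Collaps_{\sigma}.2 \, x \, y \, \eta$ has type $x =_{\sigma} y$ by the inductive hypothesis, the rule $\apppm_{\sigma,\tau}$ then yields $f\,x =_{\tau} g\,y$, and a final application of $\Collaps_{\tau}.1$ promotes this to $f\,x \Eqpm_{\tau} g\,y$, which is exactly the body of $f \Eqpm_{\sigma \rightarrow \tau} g$. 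For the backward component, abstracting $\xi: f \Eqpm_{\sigma \rightarrow \tau} g$, I feed $\extpm_{\sigma,\tau}$ a proof of $\forall z \ f\,z =_{\tau} g\,z$: the diagonal witness $\Collaps_{\sigma}.1 \, z \, z \, (\refle_{\sigma}\,z): z \Eqpm_{\sigma} z$ fed to $\xi$ gives $f\,z \Eqpm_{\tau} g\,z$, which $\Collaps_{\tau}.2$ lowers to $f\,z =_{\tau} g\,z$, and $\extpm_{\sigma,\tau}$ then returns $f =_{\sigma \rightarrow \tau} g$.

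For $\Ext$ I would run an entirely analogous external induction, building a family of witnesses of type $\forall x^{\sigma} \forall y^{\sigma} \ x =_{\sigma} y \Leftrightarrow x \Ext_{\sigma} y$. The base case is again the identity, since $=_{\Nat}$ and $\Ext_{\Nat}$ are literally the same formula. At $\sigma \rightarrow \tau$, using that $\Ext_{\sigma \rightarrow \tau}$ is $\forall x \ f\,x \Ext_{\tau} g\,x$, the forward direction instantiates $\apppm_{\sigma,\tau}$ at the reflexivity proof $\refle_{\sigma}\,x$ to obtain $f\,x =_{\tau} g\,x$ and then applies the inductive hypothesis at $\tau$; the backward direction lowers each $f\,x \Ext_{\tau} g\,x$ to $f\,x =_{\tau} g\,x$ by the inductive hypothesis and closes with $\extpm_{\sigma,\tau}$. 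Chaining the two equivalences $x =_{\sigma} y \Leftrightarrow x \Eqpm_{\sigma} y$ and $x =_{\sigma} y \Leftrightarrow x \Ext_{\sigma} y$ incidentally shows that $\Eqpm_{\sigma}$ and $\Ext_{\sigma}$ coincide in $\LEHAw$, in sharp contrast with the non-provability results of the earlier theorem in plain $\LHAw$.

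The only delicate point is the bookkeeping in the function-sort step: since $\LHAw$ and $\LEHAw$ carry no reduction rules on proof terms and formulas are compared only up to the congruence $\simeq$, I must ensure that each application of $\apppm_{\sigma,\tau}$, of $\extpm_{\sigma,\tau}$, and of the two projections of the inductive hypotheses is instantiated at the correct first-order arguments, so that the ascribed and inferred types agree up to $\simeq$. Once these instantiations are fixed the derivations are routine, and the Proposition follows by conjoining the two families of witnesses.
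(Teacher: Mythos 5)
Your proof is correct and follows essentially the same route as the paper: you verify by external induction on sorts that the exhibited term $\Collaps_{\sigma}$ has type $\forall x^{\sigma} \forall y^{\sigma}\ x =_{\sigma} y \Leftrightarrow x \Eqpm_{\sigma} y$, using $\apppm_{\sigma,\tau}$ for the forward direction and $\extpm_{\sigma,\tau}$ with the diagonal reflexivity witness for the backward one, and then handle $\Ext_{\sigma}$ by the analogous induction, exactly as the paper does (it merely states this second part is ``similar''). Your write-up is in fact more detailed than the paper's proof, which is fine.
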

\begin{proof}
    We proved above
    that $\Eqpm_{\sigma}$ collapses to $=_{\sigma}$
    and it can be proved in a similar fashion
    that $\Ext_{\sigma}$ also collapses
    to $=_{\sigma}$~in~$\LEHAw{}$.
\end{proof}

Using the Proposition~\ref{Prop:colaps}, 
we can now show that the image
of the last translation is fully characterized by the type
system \LEHAw{}.

We exhibit a family of proof terms $\Equiv^i_{\Phi}$
for $i = 1,2$ satisfying, 
for any formula $\Phi$ and any signatures~$\Delta$
containing the free variables of $\Phi$, that
$$\begin{array}{rcl}
         \Delta^1, \Delta^2 ; \Delta^{\Pm} & \vdash_e &  
         \Equiv^1_{\Phi}
        :
         \Phi^1 \Rightarrow \Phi^{\Pm} \\
        \Delta^1 , \Delta^2 ; \Delta^{\Pm} & \vdash_e &  
        \Equiv^2_{\Phi}
       :
         \Phi^{\Pm} \Rightarrow \Phi^1
    \end{array}$$
Such proof terms are defined as follows:
$$\begin{array}{rcl}        \Equiv^1_{t =_{\sigma} u} & \equiv &
        \lambda \xi . \trans_{\sigma} \, t^1 \, u^1 \, u^2
        (\Collaps_{\sigma}.1 \, t^1 \, u^1 \, \xi) 
        \, u^{\Pm} \\
        \Equiv^2_{t =_{\sigma} u} & \equiv &
        \lambda \xi. 
        \Collaps_{\sigma}.2 \, t^1 \, u^1 \, 
        (\trans_{\sigma} \, t^1 \, u^2 \, u^1 \,
        \xi \, (\sympm \, u^1 \, u^2 \, u^{\Pm}))
        \\[0.3cm]
        \Equiv^1_{\Phi \Rightarrow \Psi} & \equiv &
        \lambda \xi \lambda \eta. 
        \Equiv^1_{\Psi} (\xi \, (\Equiv^2_{\Phi} \, \eta)) \\
        \Equiv^2_{\Phi \Rightarrow \Psi} & \equiv &
        \lambda \xi \lambda \eta. 
        \Equiv^2_{\Psi} (\xi \, (\Equiv^1_{\Phi} \, \eta))
        \\[0.3cm]
        \Equiv^1_{\Phi \wedge \Psi} & \equiv &
        \lambda \xi. 
        (\Equiv^1_{\Phi} \xi.1 , \Equiv^1_{\Psi} \xi.2) \\
        \Equiv^2_{\Phi \wedge \Psi} & \equiv &
        \lambda \xi. 
        (\Equiv^2_{\Phi} \xi.1 , \Equiv^2_{\Psi} \xi.2 )
        \\[0.3cm]

        \Equiv^1_{\forall x^{\sigma} \, \Phi} & \equiv &
        \lambda \xi \lambda x^1 \lambda x^2. \lambda x^{\Pm}.
        \Equiv^1_{\Phi} \, (\xi \, x^1)
        \\
        \Equiv^2_{\forall x^{\sigma} \, \Phi} & \equiv &
        \lambda \xi \lambda x^1 . 
        \Equiv^2_{\Phi}[x_2:= x_1]
        [x^{\Pm}:= (\Collaps_{\sigma}.1 \, x^1 \, x^1 \,
        (\refle_{\sigma} \, x^1))] \\
        & & \hspace{2cm}
        \, (\xi \, x^1 \, x^1 \, 
        (\Collaps_{\sigma}.1 \, x^1 \, x^1 \,
        (\refle_{\sigma} \, x^1))
        )
        \\[0.3cm]
        \Equiv^1_{\exists x^{\sigma} \Phi} & \equiv &
        \lambda \xi. \letc \, [x,\eta]:= \xi \, \inc \, 
        [x,[x,(\Collaps_{\sigma}.1 \, x \, x \,
        (\refle_{\sigma} \, x), \Equiv^1_{\Phi} \xi)]] \\
        \Equiv^2_{\exists x^{\sigma} \Phi} & \equiv &
        \lambda \xi. 
        \letc \, [x^1,\eta]:= \xi \, \inc \,
        \letc \, [x^2,\chi]:= \eta \, \inc \,
        [x^1 , \Equiv^2_{\Phi} \chi.2 ] 
\end{array}$$
We can then conclude that a closed formula $\Phi$ is provable in
$\LEHAw{}$ if and only if its translation
is provable in $\LEHAw{}$.
\begin{thm}
For a closed formula $\Phi$ of $\LEHAw{}$
$$
         \vdash_e (\Equiv^1_{\Phi},\Equiv^2_{\Phi}): 
         \Phi \Leftrightarrow \Phi^{\Pm}
    $$
In particular, if $\Phi^{\Pm}$ is provable in \LHAw{}
then $\Phi$ is provable in \LEHAw{}.
\end{thm}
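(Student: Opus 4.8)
The plan is to establish the typing of the family $\Equiv^1_\Phi,\Equiv^2_\Phi$ by a \emph{simultaneous} induction on the structure of $\Phi$, proving for every signature $\Delta \supseteq \FV(\Phi)$ the two judgments $\Delta^1,\Delta^2;\Delta^{\Pm} \vdash_e \Equiv^1_\Phi : \Phi^1 \Rightarrow \Phi^{\Pm}$ and $\Delta^1,\Delta^2;\Delta^{\Pm} \vdash_e \Equiv^2_\Phi : \Phi^{\Pm} \Rightarrow \Phi^1$; the stated theorem then drops out by specializing to $\Delta = \emptyset$. The two directions must be proved together rather than separately, because the clause for implication is contravariant: typing $\Equiv^1_{\Phi \Rightarrow \Psi}$ consumes $\Equiv^2_\Phi$ together with $\Equiv^1_\Psi$, so each direction invokes the other on strict subformulas. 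This is the usual logical-relations bookkeeping and poses no real difficulty once the statement is organised as a single mutual induction.

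For the base case $\Phi \equiv (t =_\sigma u)$ I would unfold the translations to $\Phi^1 \equiv (t^1 =_\sigma u^1)$ and $\Phi^{\Pm} \equiv (t^1 \Eqpm_\sigma u^2)$, and assemble the witness from three ingredients available in the context $\Delta^1,\Delta^2;\Delta^{\Pm}$: the collapse term $\Collaps_\sigma$ of Proposition~\ref{Prop:colaps} that converts between $=_\sigma$ and $\Eqpm_\sigma$, the parametricity witness $u^{\Pm}: u^1 \Eqpm_\sigma u^2$ supplied by the first translation of System~\T{} (which is an \LHAw{} derivation, hence also an \LEHAw{} one), and transitivity and symmetry of the partial equivalence relation ($\transpm_\sigma$, $\sympm_\sigma$). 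Chaining these reproduces exactly the displayed definitions of $\Equiv^1_{t=_\sigma u}$ and $\Equiv^2_{t=_\sigma u}$. The remaining atomic case $\bot$ (and, where its translation is defined, the atom $\nullt(t)$) is handled by the identity proof term, since the translation leaves it fixed and $\Phi^1 \equiv \Phi^{\Pm}$. The conjunction case is componentwise, and the quantifier cases apply the induction hypothesis at the enlarged signature $\Delta,x^\sigma$ after instantiating the translated quantifier.

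The main obstacle is the reverse direction $\Equiv^2$ on quantifiers. Unwinding, $(\forall x^\sigma \Phi)^{\Pm} \equiv \forall x^1 \forall x^2\,(x^1 \Eqpm_\sigma x^2 \Rightarrow \Phi^{\Pm})$ while $(\forall x^\sigma \Phi)^1 \equiv \forall x^1\,\Phi^1$; to recover $\Phi^1$ from a proof of the translated formula one must instantiate both copies $x^1,x^2$ to the \emph{same} witness $x^1$ and feed in a proof of $x^1 \Eqpm_\sigma x^1$. Such a reflexivity proof is \emph{not} available in \LHAw{} — indeed the theorem of Section~\ref{subsec:ExPER} shows $\Eqpm_\sigma$ is not provably reflexive there — which is precisely why the converse can only be established in \LEHAw{}: there one forms $\Collaps_\sigma.1\, x^1\, x^1\, (\refle_\sigma\, x^1) : x^1 \Eqpm_\sigma x^1$ from the extensional reflexivity axiom and the collapse, and substitutes it for $x^{\Pm}$ together with $[x^2 := x^1]$ before appealing to the induction hypothesis. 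The existential case is dual, using $\letc$ and the same reflexivity witness. Beyond supplying reflexivity, the only care needed is to check that the substitutions $[x^2:=x^1]$ and $[x^{\Pm}:= \dots]$ commute with the translation, which follows from the substitution lemmas already recorded for $(\_)^{\Pm}$.

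Finally I would derive the theorem itself. For a closed $\Phi$ the signature $\Delta$ may be taken empty, so $\Delta^1,\Delta^2,\Delta^{\Pm}$ are empty and $\Phi^1 \equiv \Phi$ (there are no free variables to duplicate); the two judgments become $\vdash_e \Equiv^1_\Phi : \Phi \Rightarrow \Phi^{\Pm}$ and $\vdash_e \Equiv^2_\Phi : \Phi^{\Pm} \Rightarrow \Phi$, and pairing them yields $\vdash_e (\Equiv^1_\Phi,\Equiv^2_\Phi): \Phi \Leftrightarrow \Phi^{\Pm}$. For the ``in particular'' clause, any \LHAw{}-proof $M$ of $\Phi^{\Pm}$ is also an \LEHAw{}-proof, since \LEHAw{} extends \LHAw{}, so $\Equiv^2_\Phi\, M$ is an \LEHAw{}-proof of $\Phi$.
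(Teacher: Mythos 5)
Your proposal is correct and follows essentially the same route as the paper: the paper's ``proof'' is precisely the displayed construction of $\Equiv^1_{\Phi}$ and $\Equiv^2_{\Phi}$ by induction on formulas, relying on Proposition~\ref{Prop:colaps} ($\Collaps_{\sigma}$) to supply the reflexivity witness $x^1 \Eqpm_{\sigma} x^1$ needed in the quantifier cases, on the first translation for the witnesses $t^{\Pm}, u^{\Pm}$ in the equality case, and on the mutual (contravariant) induction forced by implication. Your reconstruction of these ingredients, and of the final specialization to closed $\Phi$ with $\Phi^1 \equiv \Phi$, matches the paper's argument.
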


\subsection{Adding reduction rules: a conjecture}
The proof systems used here lack of computational rules,
such as
$$\begin{array}{rcl}
         (\lambda x.M) \, t & \succ_{\beta} & M[x:=t] \\
         (\lambda \xi.M) \, N & \succ_{\beta} & M[\xi:=N] \\
         \letc \, [x,\xi]:= [t,M] \, \inc \, N & \succ_{\beta} &
         N[x:=t][\xi:= M] \\
         (M_1,M_2).i & \succ_{\beta} & M.i \\
         \Ind(\hat{x}.\Phi,M,N,0) & \succ_{\iota} & M \\
         \Ind(\hat{x}.\Phi,M,N,S \, t) & \succ_{\iota} & 
         N \, t \, \Ind(\hat{x}.\Phi,M,N,t)\\
        \peel(\refle \, t, \hat{x}.\Phi,N) & \succ_{\iota} & N 
    \end{array}$$
and in $\LEHAw$
$$\begin{array}{rcl}
        \apppm(\extpm(M),t,t,\refle \, t) & \succ &  
        M \, t
    \end{array}$$
One could try to figure out if the translation $(\_)^{\Pm}$
respects reductions, i.e. a property of the
shape~$M \leadsto N$  implies 
         $M^{\Pm} \simeq_{\beta,\iota} N^{\Pm}$.
While this is true for $\beta$-reductions, 
it seems that it does not hold for the reduction
of $ \Ind(\hat{x}.\Phi,M,N, \succe \, t)$ if $t$ contains free first-order variables.
Indeed, the term $\succe \,t$ will be translated as a proof term 
asserting an equality and if it does not compute into
$\refle \, (\succe \,t)$, 
it will block the reduction of the subterm $\peel(...)$
inside $(\Ind(\hat{x}.\Phi,M,N,S \, t))^{\Pm}$. In the case of
a closed term $t$, we conjecture 
it will compute as desired.
Concretely, we think
that the proof will use the meta properties of System~\T{} described
at the end of Section~\ref{subsec:SystemT}: one will use that
every closed term of sort $\Nat$ is $\beta$-equivalent to a term
of the shape $\succe^n \, 0$ and that the translation of such a 
term computes into $\refle \, (\succe^n \, 0)$, assuming that
the translation from System~\T{} to \LHAw respects reductions.

\begin{conj} 
If $M$ is a proof term of
\LEHAw{} without free first-order variables
and if $M$ does not contain $\peel$ then
$M \leadsto N$  implies 
         $M^{\Pm} \simeq_{\beta,\iota} N^{\Pm}$.
\end{conj}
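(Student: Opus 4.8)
The plan is to argue by induction on the derivation of $M \leadsto N$. Since $\leadsto$ is the reflexive, transitive and congruence closure of the root rules $\succ_{\beta}$ and $\succ_{\iota}$, and since the translation $(\_)^{\Pm}$ is compositional (each proof-term constructor is sent to a fixed context built from the translations of its immediate subterms), the reflexive and transitive cases are immediate, and a congruence step $M = C[R] \leadsto C[R'] = N$ with $R \succ R'$ reduces to the root case: $M^{\Pm}$ and $N^{\Pm}$ are obtained by plugging $R^{\Pm}$ and $(R')^{\Pm}$ into a common context $C^{\Pm}$, so $R^{\Pm} \simeq_{\beta,\iota} (R')^{\Pm}$ suffices because $\simeq_{\beta,\iota}$ is a congruence. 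It therefore remains to treat each root redex, under the standing hypotheses that the term is $\peel$-free and has no free first-order variables.

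First I would dispatch the purely computational rules. For this I need two substitution lemmas for the translation at the level of proof terms, in the spirit of the lemmas used for the second translation: a first-order one, $(M[x:=t])^{\Pm} \equiv M^{\Pm}[x^1:=t^1][x^2:=t^2][x^{\Pm}:=t^{\Pm}]$, and a proof-variable one, $(M[\xi:=N])^{\Pm} \equiv M^{\Pm}[\xi:=N^{\Pm}]$, both by routine induction on $M$. With these, each of $(\lambda\xi.M)\,N$, $(\lambda x.M)\,t$, the projection rule and the $\letc$-rule becomes a direct computation whose contractum is exactly the translation of the reduct. The \LEHAw-specific rule $\apppm(\extpm(M),t,t,\refle\,t) \succ M\,t$ is of the same nature, since $(\extpm(M))^{\Pm}$ is a triple $\lambda$-abstraction and $(\refle\,t)^{\Pm} \equiv t^{\Pm}$, so its translation $\beta$-reduces to $M^{\Pm}\,t^1\,t^2\,t^{\Pm} = (M\,t)^{\Pm}$.

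The two $\iota$-rules for $\Ind$ carry the content. For $\Ind(\hat{x}.\Phi,M,N,0)$ one has $0^1 \equiv 0^2 \equiv 0$ and $0^{\Pm} \equiv \refle\,0$; the translated $\Ind$ fires, the two $\beta$-redexes contract, and, crucially, the residual $\peel$ is applied to $\refle\,0$, so the target rule $\peel(\refle\,t,\hat{x}.\Phi,N)\succ_{\iota} N$ unblocks it and leaves exactly $M^{\Pm}$. The successor rule $\Ind(\hat{x}.\Phi,M,N,\succe\,t)\succ_{\iota} N\,t\,\Ind(\hat{x}.\Phi,M,N,t)$ is the heart of the matter. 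Firing the translated $\Ind$ and two $\beta$-redexes sends the translation of the redex to a $\peel$ whose scrutinee is $(\succe\,t)^{\Pm} \equiv \peel(t^{\Pm},\hat{x}.(\succe\,t^1 = \succe\,x),\refle\,(\succe\,t^1))$ and whose branch is $N^{\Pm}\,t^1\,t^1\,(\refle\,t^1)\,(I\,t^1\,(\refle\,t^1))$, where $I$ is the translated recursor occurring in $(\Ind(\hat{x}.\Phi,M,N,t))^{\Pm} = I\,t^2\,t^{\Pm}$; meanwhile the translation of the reduct is $N^{\Pm}\,t^1\,t^2\,t^{\Pm}\,(I\,t^2\,t^{\Pm})$. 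To close the gap I would prove, as a separate lemma, that the first translation of Section~\ref{subsec:FT} respects reduction, $t \leadsto u \Rightarrow t^{\Pm} \simeq_{\beta,\iota} u^{\Pm}$ (whose own delicate case, $\Rec$ on a successor, is the System~\T{} analogue of the present one), together with its corollary that every closed $\Nat$-term $t$ satisfies $t^{\Pm} \simeq_{\beta,\iota} \refle\,t$: normalize $t$ to $\succe^n\,0$ by strong normalization and the normal-form characterization, transport $t^{\Pm}$ along the first lemma, and check $(\succe^n\,0)^{\Pm} \simeq_{\beta,\iota} \refle\,(\succe^n\,0)$ by an inner induction on $n$. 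Since the ambient term has no free first-order variables, the subject $\succe\,t$ of a root redex is closed, so $t^1 \equiv t^2 \equiv t$ and $t^{\Pm} \simeq_{\beta,\iota} \refle\,t$; the inner $\peel$ then contracts, the outer $\peel$ contracts, and the surviving $N^{\Pm}\,t\,t\,(\refle\,t)\,(I\,t\,(\refle\,t))$ coincides with $N^{\Pm}\,t^1\,t^2\,t^{\Pm}\,(I\,t^2\,t^{\Pm})$.

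The step I expect to resist is exactly this successor case, and the sharp point is first-order-closedness. The identity between the branch $N^{\Pm}\,t^1\,t^1\,(\refle\,t^1)(\cdots)$ and the reduct $N^{\Pm}\,t^1\,t^2\,t^{\Pm}(\cdots)$ is genuine only when $t$ is closed: for an open subject the distinct renamings $t^1,t^2$ and the non-reflexivity proof $t^{\Pm}$ cannot be identified by $\simeq_{\beta,\iota}$, and $(\succe\,t)^{\Pm}$ remains a stuck $\peel$. This is why the global hypothesis is needed, and it is the residual difficulty: a congruence step may fire an $\Ind$ redex underneath a first-order binder $\lambda x^{\sigma}$, where the subject is no longer closed even though the whole term is. Making the induction go through therefore requires restricting contraction to first-order-closed positions, or strengthening the invariant so that every $\Ind$ redex actually contracted has a closed subject; reconciling this with the unrestricted congruence closure is, I expect, what keeps the statement at the level of a conjecture. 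The exclusion of $\peel$ from the source removes the orthogonal obstruction coming from the ad hoc, $\Elim$-based translation of $\peel$, whose behaviour under $\peel(\refle\,t,\hat{x}.\Phi,N)\succ_{\iota} N$ is not visibly $\simeq_{\beta,\iota}$ to the translation of the reduct.
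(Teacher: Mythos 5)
The statement you are addressing is left as a \emph{conjecture} in the paper: there is no proof there, only a sketch of intent, and your proposal follows that sketch faithfully. Compositionality of $(\_)^{\Pm}$ for the reflexive, transitive and congruence cases; substitution lemmas for the $\beta$-rules and for $\apppm(\extpm(M),t,t,\refle\,t) \succ M\,t$; and, for the successor case of $\Ind$, the auxiliary lemma that the translation of Section~\ref{subsec:FT} respects reduction, together with its corollary that every closed term $t$ of sort $\Nat$ satisfies $t^{\Pm} \simeq_{\beta,\iota} \refle\,t$ (via strong normalization and the characterization of closed normal forms as $\succe^n\,0$) --- this is exactly the route the paper envisages, and your computation of both sides of the successor case (the stuck inner $\peel$ in $(\succe\,t)^{\Pm}$, the branch $N^{\Pm}\,t^1\,t^1\,(\refle\,t^1)(\cdots)$ versus the reduct $N^{\Pm}\,t^1\,t^2\,t^{\Pm}(\cdots)$) is accurate.

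However, your proposal is not a proof, and you say so yourself: the congruence case of the induction is not closed. When an $\Ind$-redex with subject $\succe\,t$ is contracted under a first-order binder ($\lambda x^{\sigma}$ from a $\forall$-introduction, or the $x$ bound by $\letc$), the subject $t$ may contain $x$ free even though the ambient term $M$ is closed; then $t^1$ and $t^2$ are distinct variables, $t^{\Pm}$ is not $\simeq_{\beta,\iota}$ to any $\refle$, and the root-case equivalence your induction needs is precisely the open-subject instance that the paper says appears to fail. Nor can one conclude indirectly: the failure of $R^{\Pm} \simeq_{\beta,\iota} (R')^{\Pm}$ for the open redex $R$ does not refute $C^{\Pm}[R^{\Pm}] \simeq_{\beta,\iota} C^{\Pm}[(R')^{\Pm}]$, but it gives no way to establish it either, so the induction on $\leadsto$ genuinely collapses at this point. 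Repairing it would require either restricting $\leadsto$ so that $\Ind$-redexes are contracted only when their subject is closed, or strengthening the inductive invariant so that such configurations never arise --- and neither repair is carried out in your proposal (nor in the paper). So what you have is a correct proof for root reductions of closed subjects plus an accurate diagnosis of the remaining obstruction; that obstruction is exactly why the statement stands as a conjecture rather than a theorem.
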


However, 
the case of the reduction rule of 
$\peel$ seems more difficult to treat, as 
the translation of $\peel$ relies on an
external induction over the syntax but also because it uses
proofs of symmetry and transitivity of \Eqpm{} that do not 
seem to compute as needed. 

\section{Related work}
\label{sec:RW}

The idea to build
a syntactic model satisfying extensionality axioms
is already present in Gandy's work~\cite{gan56}. 
Concretely, in higher-order
logic, Gandy defined a syntactic model by restricting
the elements of discourse to \emph{parametric}
ones and proved that, in this model, 
extensionally equal elements satisfy the
same properties. 
Here, we adapt this construction
to the theory \HAw{} and, using ideas of 
the Curry-Howard correspondence, we formulate it
as a translation of proof systems. 
Our translation
slightly differs from the one of Gandy because
we use techniques from parametricity.
This choice comes from the idea that parametric 
translation can serve to extend equality.
Nevertheless, because extensionality and parametricity
relations collapse to equality in an extensional model
(as shown in Proposition~\ref{Prop:colaps}), it 
is somehow a matter of design:
the translation from~$\LEHAw{}$ into~$\LHAw{}$
could be designed without the use of parametricity.

Zucker already proved a result of
relative consistency 
between \EHAw{} and \NHAw{}~\cite{zuc71,tro73}.
He did it in a semantical fashion by transforming
models of \NHAw{} into models of \EHAw{}. The
method he used is similar to the method of Gandy
but, in this context, it suffices to restrict
the domain to \emph{parametric} elements and to check that 
the relation of extensionality is an equivalence
relation that is congruent (thus suited to
interpret equality).
In our work, rather than from \NHAw{}, we start with
\HAw{}: we reconstruct the equality
from scratch and show that it respects
Leibniz principle.
Despite this slight difference, our
work can be seen as the syntactical counterpart
of the result of Zucker. 
One advantage is that a syntactical
translation comes with an explicit translation
of proofs, that we formulate here as a translation
of proof terms.

The ideas behind the proof system \LHAw{} are folklore.
For instance, representing the axiom scheme of 
induction as an inference rule can be seen in
many other proof systems, as for instance in Martin-L\"of Type Theory~\cite{MLTT84}. The idea 
to use the predicate $\nullt(t)$ to derive
Peano's fourth axiom
already appears in Miquel's work~\cite{Miq09}.
The terminology~"$\peel$" that we use to denote
the eliminator of equality is similar to the
one used in some presentations of type theory~\cite{NBP90};
however our own motivation to use it is
to emphasize
that Leibniz principle is recovered
by doing an external induction on the formulas or,
more graphically, by peeling out the syntax.

\section{Conclusion}
\label{sec:Conc}
%\addcontentsline{toc}{section}{\nameref{sec:Conc}}

We designed a translation from \LEHAw{} into \LHAw{} using
techniques reminiscent of parametricity and we proved a result of relative consistency: if $\LHAw{}$ is consistent, then so is \LEHAw{}. 
The following diagram shows an intuition of the translation:

    \begin{center}
        \begin{tabular}{c@{\hspace*{2cm}}c@{\hspace*{2cm}}c}
        \LEHAw{} &  &  \LHAw{} \\
        \hline
             $t$ & $\leadsto$  & 
    \begin{tikzcd}
t^1 
\ar[equal]{d}{t^{\mathrm{pm}}} \\
%\arrow[d, "t^{\mathrm{pm}}"'] \\ 
t^2
\end{tikzcd} \\
$M: t = u$  & $\leadsto$ & 
\begin{tikzcd}
t^1 \ar[equal,d, "t^{\mathrm{pm}}"'] \arrow[equal,rd, " M^{\mathrm{pm}}",sloped] & u^1 \arrow[equal,d, "u^{\mathrm{pm}}"] \\
t^2                                                                                        & u^2                             
\end{tikzcd}
\end{tabular}
\end{center}

A first-order term $t$ is interpreted as a proof of
(parametric) equality between two copies of itself, and an equality proof
$M: t=u$ is translated into a proof of 
(parametric) equality between a copy of
$t$ and a copy of $u$. While the choice to translate $M$ as an
equality between $t^1$ and $u^2$ is ad hoc (in the sense that
it is not imposed by the translation), it is notable that
equality proofs are translated into (parametric) equality proofs
without the need of higher-order structures (that do not 
exist in this framework).

\nocite{*}
\bibliographystyle{eptcs}
\bibliography{generic}
\end{document}